\Crefname{ALC@unique}{Line}{Lines} 
\crefname{hypothesis}{Hypothesis}{Hypotheses}
\newcommand{\Prob}{\mathbb{P}}
\newcommand{\curlyP}{\mathcal{P}}
\newcommand{\curlyE}{\mathcal{E}}
\newcommand{\curlyQ}{\mathcal{Q}}
\newcommand{\curlyX}{\mathcal{X}}
\newcommand{\curlyY}{\mathcal{Y}}
\newcommand{\Comments}{1}
\newcommand{\mynote}[2]{\ifnum\Comments=1\textcolor{#1}{#2}\fi}
\newcommand{\mytodo}[2]{\ifnum\Comments=1%
  \todo[linecolor=#1!80!black,backgroundcolor=#1,bordercolor=#1!80!black]{#2}\fi}
\title{On Contamination of Symbolic Datasets\thanks{Submitted to the editors DATE.
\funding{This work was partially supported by the NSF Graduate Research Fellowship Program grant No. 2016198773 (Pearson), and the NSF IGERT grant No. 1144807.}}}
\author{Antony Pearson\thanks{Department of Applied Mathematics, University of Colorado, Boulder, CO.}
\and Manuel E. Lladser\thanks{Department of Applied Mathematics, University of Colorado, Boulder, CO
  (email{manuel.lladser@colorado.edu}).}
}
\begin{document}

\maketitle

\begin{abstract}
Data taking values on discrete sample spaces are the embodiment of modern biological research. ``Omics'' experiments produce millions of symbolic outcomes in the form of reads (i.e., DNA sequences of a few dozens to a few hundred nucleotides). Unfortunately, these intrinsically non-numerical datasets are often highly contaminated, and the possible sources of contamination are usually poorly characterized. This contrasts with numerical datasets where Gaussian-type noise is often well-justified. To overcome this hurdle, we introduce the notion of latent weight, which measures the largest expected fraction of samples from a contaminated probabilistic source that conform to a model in a well-structured class of desired models. We examine various properties of latent weights, which we specialize to the class of exchangeable probability distributions. As proof of concept, we analyze DNA methylation data from the 22 human autosome pairs.  Contrary to what it is usually assumed, we provide strong evidence that highly specific methylation patterns are overrepresented at some genomic locations when contamination is taken into account.
\end{abstract}

\begin{keywords}
Categorical data, Contamination, DNA methylation, Exchangeability, Hypothesis testing, p-value, Symbolic data, Truthiness
\end{keywords}

\begin{AMS}
62-07, 
62F03, 
62F10, 
62G10,  
62P10, 
92D20 
\end{AMS}

\section{Introduction}
Symbolic data is the epitome modern biological datasets due to the advent of high-throughput sequencing assays. These assays generate millions of comparatively short DNA sequences of a few dozen to a few hundred nucleotides and allow scientists to assess various microscopic processes such as investigating the relative abundance of unculturable organisms in an environment~\cite{LlaGouRee11,HamLla12}, or pinpointing the location of the enzymes that are actively transcribing DNA into RNA along a genome~\cite{LlaAzoAllDow17}---among many other possibilities~\cite{RNA-seq,GRO-seq,ChIP-seq,methylation_assays}. Unfortunately, these datasets are often very noisy, and it is often unclear how to describe the noise because the possible sources of data corruption can be so intricate that there is little motivation for any specific and let alone universal representation of it. In contrast, Gaussian errors are often well-justified with continuous numerical data.

There is a rich history of using mixtures to describe deviations from idealized continuous models~\cite{newcomb1886,tukey1960,huber1964,huber1965}. Such mixtures are usually of the form: $P = (1-\epsilon)\cdot Q+\epsilon\cdot R$, where $P$ is the probabilistic source producing the data, $Q$ is a Gaussian distribution, and $R$ is some contaminating probability distribution from which an observation is drawn with some small probability $\epsilon$. When the mixture is unspecified, and $P$ ought to be estimated from data, a highly specific structure for $R$ is usually needed, e.g. a Gaussian with known mean, to make $\epsilon$ and $Q$ identifiable~\cite{contGaussMix}.

In this manuscript, we address the problem of assessing contamination (or its opposite, ``purity,'') in symbolic datasets---which are intrisically discrete. Like prior work on continuous data, we model contamination as a mixture. Unlike previous lines of work, we treat contamination as incidental; in particular, we do not commit to any prespecified form for it. To do so, we introduce the notion of latent weight with respect to a given structured class of probabilistic models (e.g., exchangeable probability distributions, which are the main focus of this manuscript). Broadly speaking, this is the largest weight a model in the structured class can have as a component of the source producing the data. In particular, it describes the largest expected fraction of samples from a contaminated random sample which conform to a probabilistic model in the structured class. 

We argue that latent weights are always identifiable, and allow one to represent unstructured probabilistic models as mixtures with a well-structured component; in particular, when this component carries a substantial latent weight, most samples from the mixture can be attributed to it. Latent weights offer therefore a measure of the truthiness of a hypothesis, which may not be strictly true.

To fix ideas, consider binary random variables $X$ and $Y$ with joint probability distribution given by the matrix
\begin{equation}
P :=
\begin{blockarray}{ccc}
\hbox{\tiny $Y=0$} & \hbox{\tiny $Y=1$} \\
\begin{block}{(cc)c}
1/10 & 3/10 &  \hbox{\tiny $X=0$} \\
1/10 & 1/2 & \hbox{\tiny $X=1$} \\
\end{block}
\end{blockarray}.
\label{ide:introex}
\end{equation}
Since $X\sim Bernoulli(3/5)$ and $Y\sim Bernoulli(4/5)$, it is straightforward to check that $X$ and $Y$ are not independent. Nevertheless, we can ask whether or not $P$ can be represented as a mixture with a component with independent marginals. This is related to determining if the latent weight of $P$ with respect to the class $\curlyQ$ of product measures of the form $(\mu\otimes\nu)$, with $\mu$ and $\nu$ probability models supported on $\{0,1\}$, is positive or not (see Definition~\ref{def:main}). It turns out the largest weight one can give to the model $Bernoulli(3/5)\otimes Bernoulli(4/5)$ formed using the marginals of $(X,Y)$ is $5/6\approx 83\%$. Indeed:
\[P = \frac{5}{6}\cdot Bernoulli(3/5)\otimes Bernoulli(4/5)+\frac{1}{6} \cdot\left(\begin{array}{cc}
1/5 &  1/5\\
0 &  3/5
\end{array}\right).
\]
The latent weight of $P$ w.r.t. $Q$ must be therefore at least $5/6$. In fact, a simple calculation reveals that
\[P=\frac{24}{25}\cdot Bernoulli(5/8)\otimes Bernoulli(5/6) + \frac{1}{25}\cdot\left(\begin{array}{cc}
1 &  0\\
0 &  0
\end{array}\right),\]
i.e., the latent weight of $P$ with respect to $\curlyQ$ is at least $24/25=96\%$. Further analysis would show that it is precisely this; in particular, up to a hidden event with $96\%$ probability, $X$ and $Y$ behave independently. This finding is noteworthy for many reasons. 

On one hand, perhaps unexpectedly, the marginal distributions of $X$ and $Y$ are not associated with the latent weight of $P$ with respect to $\curlyQ$. On the other hand, if we derived our beliefs about the independence of $X$ and $Y$ from a large sample from $P$, e.g. using a chi-squared test of independence, at moderate significance levels we would typically reject the hypothesis that $X$ and $Y$ are independent, missing that most of the time this is not the case. In this regard, latent weights could be used as a proxy for the truthiness of a hypothesis, without being tied to the absolutes of truth or falsity in classical hypothesis testing approaches. Finally, if $P$ were estimated from a large but corrupted sample, as is usually the case with modern biological datasets, the high weight of $Bernoulli(5/8)\otimes Bernoulli(5/6)$ as a component of $P$ would suggest modeling $X$ and $Y$ as independent---as opposed to a more complicated model with a spurious correlation induced by a seemingly 4\% contamination

\noindent\textbf{Paper organization.} Since our primary interest is on symbolic data, we restrict ourselves to the setting of finite sample spaces. From a technical point of view, this helps to develop the theory but without trivializing it. Section~\ref{sec:general} introduces and analyzes the notion of latent weight associated with a general class of probability models. Sections~\ref{sec:exchangeableweight}-\ref{sec:proofofconcept} are exclusively devoted to the class of exchangeable probability models. Section~\ref{sec:upperbounds} introduces some upper-bounds that may be useful for assessing an exchangeable latent weight when the sample space is too large relative to the number of observations from the model. Section~\ref{sec:estimation} develops the statistical machinery necessary for inference of exchangeable latent weights when the probabilistic source (producing the data) is observed only indirectly through a random sample. Finally, Section~\ref{sec:proofofconcept} demonstrates the use of latent weights to assess the exchangeability of DNA methylation, a near-universal assumption in epigenomic analyses.

\section{Latent weights}\label{sec:general}

In what follows, $\curlyP$ \textbf{denotes the set of all probability measures over certain finite non-empty sample space $\Omega$.} As such, $\curlyP$ is compact under any norm induced metric; in particular, the total variation norm. Recall that for $P_1,P_2\in\curlyP$, this norm is defined as~\cite{lindvallCoupling}:
\[\|P_1-P_2\|:=\max_{A\subset\Omega}|P_1(A)-P_2(A)|=\frac{1}{2}\sum_{x\in\Omega}|P_1(x)-P_2(x)|.\]

In what remains of this manuscript, $\mathbf{\curlyQ\subset\curlyP}$ \textbf{denotes a closed non-empty subset of probability measures.} In particular, $\curlyQ$ is a compact subset of $\curlyP$. Instances like this include, for example, singletons, as well as models with independent marginals (when $\Omega$ is a product space), among various other possibilities.

\begin{definition}
\label{def:main}
Let $P\in\curlyP$. We define the \textit{(latent) weight of $\curlyQ$ in $P$} as the coefficient $\lambda_\curlyQ(P):=\sup\{\lambda\hbox{ such that }P\ge\lambda\cdot Q\hbox{ for some }Q\in\curlyQ\}$, where $P\ge\lambda\cdot Q$ means that $P(\omega)\ge\lambda\cdot Q(\omega)$, for all $\omega\in\Omega$.
\end{definition}

This definition resembles that given in~\cite{CheLla10,LlaChe14} but for the very different purpose of representing a long-lasting Markov chain by shorter-lived independent chains.

Clearly, $0\le\lambda_\curlyQ(P)\le 1$. Latent weights have various other properties which we now state.

\begin{theorem}
$\lambda_\curlyQ(P)=1$ if and only if $P\in\curlyQ$.
\label{thm:lambda1closure}
\end{theorem}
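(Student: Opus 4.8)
The plan is to prove both implications directly from Definition~\ref{def:main}, with the forward direction requiring the bulk of the work.

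First I would dispatch the easy direction. Suppose $P\in\curlyQ$. Then taking $Q:=P$ and $\lambda:=1$, we trivially have $P(\omega)\ge 1\cdot Q(\omega)$ for every $\omega\in\Omega$, so $\lambda$ is admissible in the supremum defining $\lambda_\curlyQ(P)$. Hence $\lambda_\curlyQ(P)\ge 1$, and combined with the already-noted bound $\lambda_\curlyQ(P)\le 1$ we get $\lambda_\curlyQ(P)=1$.

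For the converse, assume $\lambda_\curlyQ(P)=1$. I would first argue that the supremum in Definition~\ref{def:main} is actually attained, i.e., there exists $Q^\ast\in\curlyQ$ with $P\ge 1\cdot Q^\ast$. To see this, pick a sequence $\lambda_n\uparrow 1$ together with $Q_n\in\curlyQ$ such that $P(\omega)\ge\lambda_n Q_n(\omega)$ for all $\omega$. Since $\curlyQ$ is compact (being a closed subset of the compact set $\curlyP$), a subsequence $Q_{n_k}$ converges in total variation to some $Q^\ast\in\curlyQ$; passing to the limit in the finitely many inequalities $P(\omega)\ge\lambda_{n_k}Q_{n_k}(\omega)$ (using $\lambda_{n_k}\to 1$ and $Q_{n_k}(\omega)\to Q^\ast(\omega)$, valid since $\Omega$ is finite so coordinatewise and total-variation convergence coincide) yields $P(\omega)\ge Q^\ast(\omega)$ for every $\omega\in\Omega$. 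Finally, since $\sum_{\omega\in\Omega}P(\omega)=1=\sum_{\omega\in\Omega}Q^\ast(\omega)$, the pointwise inequality $P(\omega)\ge Q^\ast(\omega)$ forces $P(\omega)=Q^\ast(\omega)$ for all $\omega$; that is, $P=Q^\ast\in\curlyQ$.

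The main obstacle is the attainment-of-the-supremum step: one must invoke compactness of $\curlyQ$ to extract a convergent subsequence and then justify passing to the limit. This is precisely where the standing assumption that $\curlyQ$ is closed (hence compact, since $\curlyP$ is compact and $\Omega$ is finite) is essential; without it, one could have $\lambda_\curlyQ(P)=1$ with $P$ merely in the closure of $\curlyQ$ rather than in $\curlyQ$ itself. The finiteness of $\Omega$ makes the limiting argument routine, as there are only finitely many coordinate inequalities to pass to the limit and no measurability subtleties arise.
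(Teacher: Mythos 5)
Your proof is correct, but it takes a different route from the paper's. You extract a convergent subsequence $Q_{n_k}\to Q^\ast\in\curlyQ$ via compactness, pass to the limit in the pointwise inequalities to get $P\ge Q^\ast$, and conclude $P=Q^\ast$ from the normalization $\sum_\omega P(\omega)=\sum_\omega Q^\ast(\omega)=1$. The paper instead never extracts a limit point of the $Q_n$: it writes $P=\lambda_n Q_n+(1-\lambda_n)R_n$ with $R_n:=(P-\lambda_n Q_n)/(1-\lambda_n)\in\curlyP$, observes $P-Q_n=(1-\lambda_n)(R_n-Q_n)$, and hence gets the quantitative bound $\|P-Q_n\|=(1-\lambda_n)\|R_n-Q_n\|<1/n$, so $Q_n\to P$ and closedness of $\curlyQ$ finishes the argument. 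The two approaches use slightly different strength of hypotheses: yours genuinely needs compactness of $\curlyQ$ (harmless here, since $\curlyQ$ is a closed subset of the compact simplex $\curlyP$ over finite $\Omega$), whereas the paper's argument needs only that $\curlyQ$ is closed in total variation and would survive verbatim on an infinite sample space. On the other hand, your limit-point argument is arguably more elementary coordinatewise, and the paper's decomposition has the side benefit of introducing the mixture representation and the identity $\|P-Q\|=(1-\lambda_\curlyQ(P))\|R-Q\|$ that the paper reuses later. One last remark: your parenthetical claim that without closedness one could have $\lambda_\curlyQ(P)=1$ with $P$ only in the closure of $\curlyQ$ is exactly right, and the paper's proof makes clear that the statement holds with $\curlyQ$ replaced by its closure in general.
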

\begin{proof}
If $\lambda_\curlyQ(P)=1$ then there exists a sequence of real numbers $(\lambda_n)_{n\ge1}$ such that $1-1/n<\lambda_n\le1$, and $P\ge\lambda_n\cdot Q_n$ for some $Q_n\in\curlyQ$. Without loss of generality assume that $\lambda_n<1$. In particular, if we define $R_n:=(P-\lambda_n\cdot Q_n)/(1-\lambda_n)$ then $R_n\in\curlyP$ and $P=\lambda_n\cdot Q_n+(1-\lambda_n)\cdot R_n$. As a result: $(P-Q_n)=(1-\lambda_n)\cdot(R_n-Q_n)$, which implies that $\|P-Q_n\|=(1-\lambda_n)\cdot\|R_n-Q_n\|<1/n$. Hence $P\in\curlyQ$ because $\curlyQ$ is closed. The converse is immediate because $\lambda_\curlyQ(Q)=1$ for all $Q\in\curlyQ$.
\end{proof}

\begin{theorem}
\begin{equation}
\lambda_\curlyQ(P) =\sup\limits_{Q\in\curlyQ}\,\,\min\limits_{\omega\in\Omega}\frac{P(\omega)}{Q(\omega)},
\label{ide:supinformula}
\end{equation} 
where any division by zero (including zero-over-zero) is to be interpreted as $+\infty$, and this supremum is achieved; in particular, there is $Q\in\curlyQ$ such that $P\ge\lambda_\curlyQ(P)\cdot Q$ and there is $R\in\curlyP$ such that \begin{align}
P=\lambda_\curlyQ(P)\cdot Q+(1-\lambda_\curlyQ(P))\cdot R.
\label{ide:genmixrep}
\end{align}
Further, if $\curlyQ$ is convex and $\lambda_\curlyQ(P)<1$ then  $\lambda_\curlyQ(R)=0$; in particular, $R\notin\curlyQ$.
\label{thm:supminformula}
\end{theorem}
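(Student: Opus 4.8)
The plan is to prove the three assertions of Theorem~\ref{thm:supminformula} in sequence: first the formula \eqref{ide:supinformula} together with attainment of the supremum, then the mixture representation \eqref{ide:genmixrep}, and finally the statement about $\lambda_\curlyQ(R)=0$ when $\curlyQ$ is convex.

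For the formula, I would start from the observation that, for a fixed $Q\in\curlyQ$, the condition $P\ge\lambda\cdot Q$ is equivalent to $\lambda\le P(\omega)/Q(\omega)$ for every $\omega$ with $Q(\omega)>0$ (and is vacuous where $Q(\omega)=0$, consistent with the division-by-zero convention). Hence the largest $\lambda$ compatible with this fixed $Q$ is exactly $\min_{\omega\in\Omega}P(\omega)/Q(\omega)$, and taking the supremum over $Q\in\curlyQ$ gives \eqref{ide:supinformula}. To see the supremum is achieved, I would use compactness: the map $Q\mapsto\min_{\omega\in\Omega}P(\omega)/Q(\omega)$ is upper semicontinuous on the compact set $\curlyQ$ (it is a finite minimum of functions $Q\mapsto P(\omega)/Q(\omega)$, each of which is continuous where $Q(\omega)>0$ and tends to $+\infty$ as $Q(\omega)\to0$, so each is lsc and bounded below, and the min is usc — one has to be slightly careful near the boundary, but since $\Omega$ is finite and $\sum_\omega Q(\omega)=1$ forces some coordinate bounded away from $0$, the relevant extended-real-valued function attains its sup). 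Thus there is $Q\in\curlyQ$ with $\min_\omega P(\omega)/Q(\omega)=\lambda_\curlyQ(P)$, i.e. $P\ge\lambda_\curlyQ(P)\cdot Q$.

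Given such a $Q$, the mixture representation is immediate when $\lambda_\curlyQ(P)<1$: set $R:=(P-\lambda_\curlyQ(P)\cdot Q)/(1-\lambda_\curlyQ(P))$, which is nonnegative by $P\ge\lambda_\curlyQ(P)\cdot Q$ and sums to $1$, so $R\in\curlyP$ and \eqref{ide:genmixrep} holds by construction. When $\lambda_\curlyQ(P)=1$, Theorem~\ref{thm:lambda1closure} gives $P\in\curlyQ$, and we may take $Q=P$ and $R$ arbitrary (the coefficient of $R$ is zero), so \eqref{ide:genmixrep} holds trivially.

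For the last claim, assume $\curlyQ$ is convex and $\lambda:=\lambda_\curlyQ(P)<1$, and suppose toward a contradiction that $\lambda_\curlyQ(R)>0$, so there exist $\mu>0$ and $Q'\in\curlyQ$ with $R\ge\mu\cdot Q'$. Substituting into \eqref{ide:genmixrep} gives $P\ge\lambda\cdot Q+(1-\lambda)\mu\cdot Q'=(\lambda+(1-\lambda)\mu)\cdot\bigl(\tfrac{\lambda}{\lambda+(1-\lambda)\mu}Q+\tfrac{(1-\lambda)\mu}{\lambda+(1-\lambda)\mu}Q'\bigr)$, and by convexity of $\curlyQ$ the bracketed convex combination lies in $\curlyQ$; hence $\lambda_\curlyQ(P)\ge\lambda+(1-\lambda)\mu>\lambda$, contradicting the definition of $\lambda$. (If $\lambda=0$ this still works, giving $\lambda_\curlyQ(P)\ge\mu>0$, again a contradiction.) Therefore $\lambda_\curlyQ(R)=0$, and since $\lambda_\curlyQ(Q'')=1$ for every $Q''\in\curlyQ$, we conclude $R\notin\curlyQ$. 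The main obstacle I anticipate is the careful handling of the division-by-zero convention in establishing upper semicontinuity and attainment of the supremum in \eqref{ide:supinformula} — in particular making sure the "$+\infty$" values do not spoil the argument — whereas the mixture representation and the convexity argument are routine once the formula is in hand.
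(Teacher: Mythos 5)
Your proposal is correct and follows essentially the same route as the paper: the same pointwise characterization of the admissible $\lambda$ for a fixed $Q$, compactness of $\curlyQ$ for attainment of the supremum, the same construction of $R:=(P-\lambda_\curlyQ(P)\cdot Q)/(1-\lambda_\curlyQ(P))$, and the same convexity argument (the paper phrases it via the decomposition $R=\alpha Q'+(1-\alpha)R'$ with $\alpha=\lambda_\curlyQ(R)$ rather than by contradiction, but it is the identical estimate $\lambda\ge\lambda+(1-\lambda)\alpha$). One small slip worth fixing: for attainment you need $Q\mapsto\min_{\omega}P(\omega)/Q(\omega)$ to be \emph{upper} semicontinuous on the compact set $\curlyQ$, which holds because each coordinate map $Q\mapsto P(\omega)/Q(\omega)$ (with the $+\infty$ convention at $Q(\omega)=0$) is upper semicontinuous and a finite minimum of upper semicontinuous functions is upper semicontinuous — your intermediate assertion that each coordinate map is lower semicontinuous and that this makes the minimum usc does not follow as written, though your conclusion is correct.
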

\begin{proof}
Fix $P\in\curlyP$, and define $\lambda^*:=\sup_{Q\in\curlyQ}\min_{\omega\in\Omega}P(\omega)/Q(\omega)$. For each $Q\in\curlyQ$, note that $P\ge\lambda\cdot Q$ if and only if $\min_{\omega\in\Omega}P(\omega)/Q(\omega)\ge\lambda$; in particular, $\lambda^*\ge\lambda_\curlyQ(P)$. Define $\epsilon:=\lambda^*-\lambda_\curlyQ(P)$. If $\epsilon>0$ then, from the definition of $\lambda^*$, there would be $Q\in\curlyQ$ such that $\min_{\omega\in\Omega}P(\omega)/Q(\omega)\ge\lambda^*-\epsilon/2$. In particular, for all $\omega\in\Omega$, $P(\omega)\ge(\lambda^*-\epsilon/2)Q(\omega)$, hence $\lambda_\curlyQ(P)\ge\lambda^*-\epsilon/2>\lambda_\curlyQ(P)$, a contradiction. Therefore $\epsilon=0$ i.e. $\lambda^*=\lambda_\curlyQ(P)$.

Next, note that for each $Q\in\curlyQ$:
\begin{equation}
\min_{\omega\in\Omega}\frac{P(\omega)}{Q(\omega)}=\min_{\omega:\,Q(\omega)>0}\frac{P(\omega)}{Q(\omega)}.
\label{ide:lambdaQPmin1}
\end{equation}
Further, if $\lim_{n\to\infty}Q_n=Q$ in total variation distance then $\lim_{n\to\infty}Q_n(\omega)=Q(\omega)$ uniformly for all $\omega\in\Omega$. Hence, the transformation $Q\in\curlyQ\rightarrow\min_{\omega\in\Omega}P(\omega)/Q(\omega)$ from $\curlyQ$ to $\mathbb{R}$ is continuous; in particular, because $\curlyQ$ is compact, the supremum in equation~(\ref{ide:supinformula}) is achieved, and there is $Q\in\curlyQ$ such that $P\ge\lambda_\curlyQ(P) \cdot Q$. For brevity, define $\lambda:=\lambda_\curlyQ(P)$. Following an argument similar to the proof of Theorem~\ref{thm:lambda1closure}, it follows that there is $R\in\curlyP$ such that $P=\lambda\cdot Q+(1-\lambda)\cdot R$. Likewise, if $\alpha:=\lambda_\curlyQ(R)$ then there is $Q'\in\curlyQ$ and $R'\in\curlyP$ such that $R=\alpha\cdot Q'+(1-\alpha)\cdot R'$. As a result, if $\curlyQ$ is convex then $P\ge\lambda\cdot Q+(1-\lambda)\alpha\cdot Q'=(\lambda+(1-\lambda)\alpha)\cdot Q''$ for some $Q''\in\curlyQ$. But then $\lambda\ge\lambda+(1-\lambda)\alpha$, which implies that $(1-\lambda)\alpha=0$. Thus, if $\lambda<1$ then $\alpha=0$ as claimed.
\end{proof}

We emphasize that the probability measure $Q$ in Theorem~\ref{thm:supminformula} is not necessarily unique. For instance, if $\curlyQ$ is the space of probability measures over $\Omega=\{0,1\}^2$ with i.i.d. marginals, and $P$ is the uniform probability measure over the set $\{(0,0),(1,1)\}$ then $\lambda_\curlyQ(P)=1/2$, and $\delta_{(0,0)}$ and $\delta_{(1,1)}$, the point masses at $(0,0)$ and $(1,1)$, respectively, both achieve the supremum in equation~(\ref{ide:supinformula}). Likewise, since $P=\delta_{(0,0)}/2+\delta_{(1,1)}/2$, this counterexample shows the importance of convexity to guarantee that the probability measure $R$ in equation~(\ref{ide:genmixrep}) does not belong to $\curlyQ$.

The identity in equation~(\ref{ide:genmixrep}) implies that any probability model $P$ over $\Omega$ admits a mixture representation with a component in $\curlyQ$ of weight $\lambda_\curlyQ(P)$. (This motivates the terminology of ``latent weight.'') The weight of $\curlyQ$ in $P$ may be interpreted therefore as the largest expected fraction of observations from $P$ that can be attributed to a single model in $\curlyQ$.

We also note that a large latent weight is indicative of closeness in total variation distance to $\curlyQ$. In fact, if $Q$ and $R$ are as in equation~(\ref{ide:genmixrep}) then (see proof of Theorem~\ref{thm:lambda1closure}): $\|P-Q\|=(1-\lambda_\curlyQ(P))\cdot\|R-Q\|$. So, if $\lambda_\curlyQ(P)$ is close to 1 then $P$ is close to $\curlyQ$ in total variation distance. The converse is not necessarily true, however. For instance, consider $\Omega=\{0,1\}^d$, with finite $d>1$, and let $0^d$ and $1^d$ denote the sequences of $d$ zeros and ones, respectively. Define $P(x):=(2^d-2)^{-1}$ for $x\in\Omega\setminus\{0^d,1^d\}$. In particular, $P(0^d)=P(1^d)=0$, and $\|P-\hbox{Uniform}(\{0,1\}^d)\|=2^{2-d}$, i.e. $P$ is very close in total variation distance to $\curlyQ$. Nevetheless, the latent weight of $P$ with respect to the class $\curlyQ$ of i.i.d. distributions over $\{0,1\}^d$ is $0$. Indeed, for all $\lambda>0$ and $Q:=\otimes_{k=1}^{d}Bernoulli(q)$, with $0\le q\le1$, it is not possible to have $P\ge\lambda\cdot Q$. Otherwise, because $Q(0^d)=(1-q)^d$ and $Q(1^d)=q^d$, it would follow that $q=1$ and $q=0$ simultaneously.

\section{Exchangeable weights}\label{sec:exchangeableweight}

\textbf{The remaining of the manuscript focuses on the class of exchangeable probabilistic models.} Recall that a finite sequence of random variables $X_1,\ldots,X_d$ is called exchangeable if for any permutation $\sigma$ of $(1,\ldots,d)$ the random vector $(X_{\sigma(1)},\ldots,X_{\sigma(d)})$ has the same distribution as $(X_1,\ldots,X_d)$.

Exchangeability is a common a priori assumption in Bayesian statistics~\cite{definetti}, permutation hypothesis testing~\cite{good2002permutation}, and coalescent theory~\cite{kingman}. The class of exchangeable models contains all finite sequences of independent and identically distributed (i.i.d.) random variables, but is generally much larger. In fact, sampling colored balls from an urn without replacement produces a random sequence of colors which is exchangeable but not necessarily independent.

In what follows, $(X_1,\ldots,X_d)$, with $d>1$ finite, is a random vector with probability distribution $P$. Each $X_i$ is assumed to take values in a certain finite set $\curlyX$ of cardinality $k>1$. In addition, $\curlyP$ denotes the set of all probability models over $\curlyX^d$, and $\curlyE\subset\curlyP$ the subset of exchangeable models, which is clearly closed. We refer to the latent weight of $P$ with respect to $\curlyE$, $\lambda_\curlyE(P)$, as the exchangeable weight of $P$.

\begin{definition}
For each $x\in\curlyX^d$, let $[x]\subset\curlyX^d$ denote the set of all vectors of the form $(x_{\sigma(1)},\ldots,x_{\sigma(d)})$, with $\sigma$ a permutation of $(1,\ldots,d)$.
\label{def:equivclass}
\end{definition}

The set $\curlyX^d$ has $\binom{k+d-1}{d}$ permutation-equivalence classes. (The number of permutation equivalence classes equals the number of ways to place $d$ unlabelled balls in $k$ labelled urns.) In fact, $\curlyE$ is a simplex, in particular, also a convex set, with $\binom{k+d-1}{d}$ extreme points, each of which is a probability model having uniform mass over a single permutation-equivalence class~\cite{diaconis1978}.

In what follows, $[\curlyX^d]$ denotes the set of permutation equivalence classes of $\curlyX^d$. Next we show various properties of exchangeable weights, starting with the following explicit formula.

\begin{theorem}
\label{thm:3}
For all $P\in\curlyP$:
\begin{align} 
\lambda_\curlyE(P) =\sum_{x\in\curlyX^d} \min_{y\in [x]} P(y) =  \sum_{z\in[\curlyX^d]} |z|\cdot\min_{x\in z}P(x).
\label{eq:lambda}
\end{align}
If $\lambda_\curlyE(P)>0$ then exactly one probability measure $Q\in\curlyE$ is associated with the exchangeable weight of $P$: for each $x\in\curlyX^d$, $Q(x)=\min_{y\in[x]}P(y)/\lambda_\curlyE(P)$.
\end{theorem}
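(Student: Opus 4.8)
The plan is to identify the optimal exchangeable component explicitly as the normalized ``symmetrized lower envelope'' of $P$. First I would define $m(x):=\min_{y\in[x]}P(y)$ for $x\in\curlyX^d$ and record two elementary facts: (i) $m$ is constant on each permutation-equivalence class, since $[x]=[y]$ implies $m(x)=m(y)$; and (ii) regrouping $\sum_{x\in\curlyX^d}m(x)$ by equivalence classes yields $\sum_{z\in[\curlyX^d]}|z|\cdot\min_{x\in z}P(x)$, because each of the $|z|$ vectors $x$ in a class $z$ contributes the same value $\min_{x\in z}P(x)$. This already gives the second equality in~(\ref{eq:lambda}), so it remains to show that $\lambda_\curlyE(P)$ equals the common value $S:=\sum_{x\in\curlyX^d}m(x)$.

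For the lower bound $\lambda_\curlyE(P)\ge S$: if $S=0$ this is trivial, and if $S>0$ I would set $Q^*(x):=m(x)/S$. By fact (i), $Q^*\in\curlyE$; it is a probability measure by construction; and $P(x)\ge m(x)=S\cdot Q^*(x)$ for every $x\in\curlyX^d$. Hence $S$ is an admissible value in Definition~\ref{def:main}, so $\lambda_\curlyE(P)\ge S$.

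For the upper bound $\lambda_\curlyE(P)\le S$: take any $\lambda\ge0$ and any $Q\in\curlyE$ with $P\ge\lambda\cdot Q$. Fixing $x$, exchangeability gives $Q(y)=Q(x)$ for all $y\in[x]$, hence $P(y)\ge\lambda\cdot Q(x)$ for all $y\in[x]$, and therefore $m(x)\ge\lambda\cdot Q(x)$. Summing over $x\in\curlyX^d$ and using $\sum_xQ(x)=1$ gives $S\ge\lambda$. Taking the supremum over all admissible $\lambda$ yields $\lambda_\curlyE(P)\le S$, and combining with the previous step, $\lambda_\curlyE(P)=S$, which is~(\ref{eq:lambda}).

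Finally, for uniqueness when $\lambda:=\lambda_\curlyE(P)=S>0$: by Theorem~\ref{thm:supminformula} (or directly via the construction of $Q^*$ above) there is $Q\in\curlyE$ with $P\ge\lambda\cdot Q$; repeating the upper-bound argument gives $m(x)\ge\lambda\cdot Q(x)$ for all $x$, while $\sum_xm(x)=S=\lambda=\lambda\sum_xQ(x)$ forces every one of these inequalities to be an equality. Thus $Q(x)=m(x)/\lambda=\min_{y\in[x]}P(y)/\lambda_\curlyE(P)$, pinning $Q$ down uniquely. The only genuinely delicate point is isolating the degenerate case $S=0$ (equivalently $\lambda_\curlyE(P)=0$), where the admissible $Q$ is manifestly non-unique and the normalization of $m$ is unavailable; everything else is bookkeeping once one observes that $m$ is precisely the sub-probability mass that any exchangeable component must respect, class by class.
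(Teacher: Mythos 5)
Your proof is correct and follows essentially the same route as the paper: the same upper bound via exchangeability of any admissible $Q$ plus summation, and the same explicit construction $Q^*(x)=m(x)/S$ for the lower bound (you handle the degenerate case $S=0$ a bit more carefully than the paper does). Your uniqueness step, forcing equality in $m(x)\ge\lambda\cdot Q(x)$ because both sides sum to $\lambda$, is a slightly more direct version of the paper's contradiction argument but rests on the identical observation.
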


\begin{proof}
Define $\lambda^*:=\sum_{x\in\curlyX^d} \min_{y\in [x]} P(y)$. Suppose $\beta\ge0$ and $Q'\in\curlyE$ are such that $P\geq\beta\cdot Q'$. Since exchangeability implies $Q'(y)=Q'(x)$ for all $y\in[x]$, it follows that 
\[\beta\cdot Q'(x) \leq \min_{y\in[x]}P(y).\]
Moreover, because $Q'$ is a probability measure, we have that 
\[\beta = \sum_{x\in\curlyX^d}\beta\cdot Q'(x) \leq  \sum_{x\in\curlyX^d} \min_{y\in[x]}P(y)=\lambda^*,\]
which implies that $\lambda_\curlyE(P)\le\lambda^*$. But observe that $P\ge\lambda^*\cdot Q$, where $Q(x):=\min_{y\in[x]}P(y)/\lambda^*$. Since $Q$ is an exchangeable probability measure over $\curlyX^d$, it follows that $\lambda_\curlyE(P)\ge\lambda^*$, hence $\lambda_\curlyE(P)=\lambda^*$. The second identity in equation~(\ref{eq:lambda}) is now direct from this equality.

Finally, assume that $\lambda^*>0$ and suppose that $S\in\curlyE$ is such that $P\ge\lambda^* \cdot S$. We show using an argument by contradiction that $S\le Q$. Indeed, if there were $x\in\curlyX^d$ such that $S(x)>Q(x)$ then, because $Q$ and $S$ are exchangeable, the definition of $Q$ would imply that there is $y\in[x]$ such that $S(y)>Q(y)$ and $\lambda^*\cdot Q(y)=P(y)$. In particular, $P(y)<\lambda^* \cdot S(y)$, which is not possible. Thus $S\le Q$, which implies that $S=Q$ as claimed.
\end{proof}

Due to Theorem~\ref{thm:supminformula}, whenever $P$ is not itself exchangeable i.e. $\lambda_\curlyE(P)<1$, $P$ also has a unique component $R:= (P-\lambda\cdot Q)/(1-\lambda)$, which we call the unexchangeable component, which admits no exchangeable component of its own. In this sense $P$ can be distilled entirely into its exchangeable and unexchangeable parts. This property allows one to combine an exchangeable probability model with one that is totally unexchangeable so that the resultant source has a desired exchangeable weight:

\begin{corollary}
If $P=\beta\cdot Q' + (1-\beta)\cdot R'$, where $0\le\beta\le1$, $Q'\in\curlyE$, and $R'\in\curlyP\setminus\curlyE$ is such that $\lambda_\curlyE(R')=0$, then $\lambda_\curlyE(P)=\beta$.
\end{corollary}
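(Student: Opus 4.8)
The plan is to read the result off the closed-form expression for the exchangeable weight in Theorem~\ref{thm:3}, which holds for every element of $\curlyP$ and hence simultaneously for $P$ and for $R'$.

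First I would unpack the hypothesis $\lambda_\curlyE(R')=0$. By equation~(\ref{eq:lambda}),
\[\lambda_\curlyE(R')=\sum_{x\in\curlyX^d}\min_{y\in[x]}R'(y),\]
and since every summand is nonnegative, $\lambda_\curlyE(R')=0$ forces $\min_{y\in[x]}R'(y)=0$ for each $x\in\curlyX^d$; equivalently, every permutation-equivalence class contains a point on which $R'$ vanishes.

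Next I would evaluate $\min_{y\in[x]}P(y)$ for a fixed $x$. Since $Q'\in\curlyE$, the value $Q'(y)$ is constant over $y\in[x]$ and equals $Q'(x)$, so for every $y\in[x]$ we have $P(y)=\beta\cdot Q'(x)+(1-\beta)\cdot R'(y)$. Minimizing the right-hand side over $y\in[x]$---an affine, nondecreasing function of $R'(y)$ because $\beta\le1$---and using the previous step gives
\[\min_{y\in[x]}P(y)=\beta\cdot Q'(x)+(1-\beta)\cdot\min_{y\in[x]}R'(y)=\beta\cdot Q'(x).\]
Summing over $x\in\curlyX^d$ and invoking equation~(\ref{eq:lambda}) once more yields
\[\lambda_\curlyE(P)=\sum_{x\in\curlyX^d}\min_{y\in[x]}P(y)=\beta\sum_{x\in\curlyX^d}Q'(x)=\beta,\]
since $Q'$ is a probability measure.

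The whole argument is bookkeeping once Theorem~\ref{thm:3} is available, so I do not anticipate a real obstacle; the one place warranting a line of justification is the interchange of $\min_{y\in[x]}$ with the affine map $t\mapsto\beta\cdot Q'(x)+(1-\beta)\cdot t$, which is valid for every $\beta\in[0,1]$ and therefore covers the boundary cases $\beta=0$ and $\beta=1$ without separate treatment. An alternative route---note $P\ge\beta\cdot Q'$ pointwise, so $\lambda_\curlyE(P)\ge\beta$ straight from Definition~\ref{def:main}, then obtain the reverse inequality from the uniqueness of the distilled (exchangeable/unexchangeable) decomposition guaranteed by Theorems~\ref{thm:supminformula} and~\ref{thm:3}---also works, but it would require handling $\lambda_\curlyE(P)\in\{0,1\}$ by hand, so I would prefer the direct computation above.
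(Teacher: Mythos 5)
Your proof is correct and follows essentially the same route as the paper: both arguments apply equation~(\ref{eq:lambda}) to $R'$ to conclude $\min_{y\in[x]}R'(y)=0$ for every class, then use the exchangeability of $Q'$ to pull $\beta\cdot Q'(x)$ out of the minimum and sum to get $\beta$. Your version merely spells out the interchange of the minimum with the affine map, which the paper leaves implicit.
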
 
\begin{proof}
Since $\lambda_\curlyE(R')=0$, the identity in equation~(\ref{eq:lambda}) implies that for each $z\in[\curlyX^d]$ there is $y\in z$ such that $R'(y)=0$. In particular, because $Q'$ is exchangeable, reusing equation~(\ref{eq:lambda}) we obtain that:
\[\lambda_\curlyE(P)=\sum_{x\in\curlyX^d}\left(\beta\cdot Q'(x) + (1-\beta)\cdot \min_{y\in[x]}R'(y)\right)=\beta\cdot \sum_{x\in\curlyX^d}Q'(x)=\beta.\]
\end{proof}

\subsection{Bounds on exchangeable weights}\label{sec:upperbounds}

In what follows, for each non-empty $I\subset\{1,\ldots,d\}$, $P_I$ denotes the marginal distribution of $(X_i)_{i\in I}$. Clearly, $P_I$ is exchangeable when $P$ is exchangeable.

The following result may be useful to estimate an upper-bound on the exchangeable weight of a probabilistic source on $\curlyX^d$. Indeed, estimating $\lambda_\curlyE(P)$ when $P$ is known only indirectly through data requires estimating $(k^d-1)$ free parameters, which may be infeasible in practice. However, marginalizing $X$ to a random sub-vector of dimension $s<d$, one may significantly reduce the dimension of the estimation problem.

\begin{theorem}
The exchangeable weight of a full $d$-dimensional joint distribution is a lower bound on the exchangeable weight of any marginal, i.e. $\lambda_{\curlyE}(P)\leq\lambda_{\curlyE}(P_I)$, for all $P\in\curlyP$ and non-empty $I\subset\{1,\ldots,d\}$. 
\label{thm:marginalUB}
\end{theorem}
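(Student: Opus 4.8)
The plan is to exploit the explicit formula for exchangeable weights in equation~(\ref{eq:lambda}) together with the fact that marginalization is obtained by summing $P$ over the coordinates not in $I$. Without loss of generality I would take $I=\{1,\ldots,s\}$ with $s<d$ (the case $I=\{1,\ldots,d\}$ being trivial), and write a generic element of $\curlyX^d$ as $(u,v)$ with $u\in\curlyX^s$ and $v\in\curlyX^{d-s}$, so that $P_I(u)=\sum_{v\in\curlyX^{d-s}}P(u,v)$. The goal is to bound $\lambda_\curlyE(P_I)=\sum_{u\in\curlyX^s}\min_{u'\in[u]}P_I(u')$ from below by $\lambda_\curlyE(P)=\sum_{(u,v)\in\curlyX^d}\min_{(u',v')\in[(u,v)]}P(u',v')$.

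The key step is the following subadditivity-type observation. Fix $u\in\curlyX^s$ and let $\sigma$ be a permutation of $(1,\ldots,s)$ achieving $\min_{u'\in[u]}P_I(u')=P_I(\sigma u)$, where $\sigma u$ denotes the permuted vector. Extend $\sigma$ to a permutation $\tilde\sigma$ of $(1,\ldots,d)$ that fixes the last $d-s$ coordinates. Then
\begin{align}
\min_{u'\in[u]}P_I(u') = P_I(\sigma u) = \sum_{v\in\curlyX^{d-s}}P(\sigma u, v) = \sum_{v\in\curlyX^{d-s}}P(\tilde\sigma(u,v)) \ge \sum_{v\in\curlyX^{d-s}}\min_{(u'',v'')\in[(u,v)]}P(u'',v''),
\label{eq:marginalUBkey}
\end{align}
where the inequality holds termwise because $\tilde\sigma(u,v)\in[(u,v)]$ for every $v$. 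Summing~(\ref{eq:marginalUBkey}) over $u\in\curlyX^s$, the left side becomes $\lambda_\curlyE(P_I)$, and the right side becomes $\sum_{u\in\curlyX^s}\sum_{v\in\curlyX^{d-s}}\min_{(u'',v'')\in[(u,v)]}P(u'',v'')=\sum_{x\in\curlyX^d}\min_{y\in[x]}P(y)=\lambda_\curlyE(P)$. This yields $\lambda_\curlyE(P)\le\lambda_\curlyE(P_I)$, as desired. For a general non-empty $I$ one simply relabels coordinates so that $I$ occupies the first $|I|$ positions, which does not change any of the quantities involved since permuting coordinates globally is a bijection of $\curlyX^d$ that preserves permutation-equivalence classes.

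The main obstacle is being careful about what ``$\min_{y\in[x]}$'' ranges over once coordinates are split into an $I$-block and its complement: a permutation of the full index set $(1,\ldots,d)$ can mix coordinates between the two blocks, so the minimum defining $\lambda_\curlyE(P)$ is over a \emph{larger} symmetry group than the product of the two block-symmetric groups. The argument above sidesteps this by only using the sub-permutations $\tilde\sigma$ that respect the block structure to produce a valid \emph{upper} bound on each term $\min_{y\in[(u,v)]}P(y)$, which is exactly the direction needed; one does not need to understand the full minimum, only to lower-bound the marginal's minimum by something dominated by it. A secondary point to state cleanly is that $P_I\in\curlyP$ over $\curlyX^{|I|}$ and is a legitimate input to Theorem~\ref{thm:3}, and that $|I|>1$ is not actually required for the formula (it holds for any finite dimension $\ge 1$), so no degenerate-case caveat is needed.
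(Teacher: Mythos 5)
Your proof is correct, but it takes a genuinely different route from the paper's. The paper argues at the level of the exchangeable component: by Theorem~\ref{thm:3} there is an exchangeable $Q$ with $P\ge\lambda_\curlyE(P)\cdot Q$, so summing out the coordinates outside $I$ gives $P_I\ge\lambda_\curlyE(P)\cdot Q_I$; since marginals of exchangeable measures are exchangeable, the definition of the latent weight immediately yields $\lambda_\curlyE(P_I)\ge\lambda_\curlyE(P)$. You instead work directly with the explicit formula~(\ref{eq:lambda}): you pick a permutation $\sigma$ realizing the marginal's minimizer, extend it to a block permutation $\tilde\sigma$ of $(1,\ldots,d)$, and bound $P(\tilde\sigma(u,v))$ termwise below by $\min_{y\in[(u,v)]}P(y)$ using $\tilde\sigma(u,v)\in[(u,v)]$. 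Your explicit attention to the fact that the full symmetric group is larger than the product of the block-symmetric groups, and that you only need the inequality in one direction, is exactly the right point, and the relabeling argument for general $I$ is sound. The two arguments are of comparable length, but the paper's version has a side benefit: it produces the intermediate inequality $P_I\ge\lambda_\curlyE(P)\cdot Q_I$, which the subsequent corollary on $d$-extendibility reuses to identify the exchangeable component of $P_I$ with $Q_I$ when equality holds. Your combinatorial version does not hand you that identification directly (though it could be extracted by tracking where your termwise inequality is tight); in exchange it is slightly more self-contained, needing only the formula~(\ref{eq:lambda}) and no appeal to the preservation of exchangeability under marginalization.
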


In particular, if $\lambda_\curlyE(P_I)$ is small for some $I$ then so is $\lambda_\curlyE(P)$; in which case, only very few of the data produced by $P$ could be attributed to an exchangeable source.

\begin{proof}
For each $x\in \curlyX^{|I|}$ and  $\alpha\in\curlyX^{d-|I|}$, let $x\alpha$ be the vector $y\in\curlyX^d$ such that $(y_i)_{i\in I}=x$ and $(y_i)_{i\notin I}=\alpha$. If $Q$ is the exchangeable probability measure given in equation~(\ref{eq:lambda}) then
\[P_I(x)=\sum_{\alpha\in\curlyX^{d-|I|}}P(x\alpha)\geq \sum_{\alpha\in\curlyX^{d-|I|}}\lambda_{\curlyE}(P)\cdot Q(x\alpha)= \lambda_{\curlyE}(P)\cdot Q_I(x),\]
for each $x\in\curlyX^{|I|}$. Hence, since $Q_I$ is exchangeable, $\lambda_{\curlyE}(P)\leq\lambda_{\curlyE}(P_I)$.
\end{proof}

The tightness of the inequality in the theorem is related to the notion of extendibility~\cite{diaconis1980}. Let $1\le s\le d$. An (exchangeable) probability model $\mu$ on $\curlyX^s$ is called $d$-extendible if there is an exchangeable probability measure $\nu$ on $\curlyX^d$ and $I\subset\{1,\ldots,d\}$ of cardinality  $s$ such that $\mu=\nu_I$. Necessary and sufficient conditions for extendibility can be found in~\cite{extendibility1}.

\begin{corollary}
For any $P\in\curlyP$ with $\lambda_\curlyE(P)>0$, and $I\subset\{1,\ldots,d\}$, if $\lambda_\curlyE(P)=\lambda_\curlyE(P_I)$ then the exchangeable component of $P_I$ is $d$-extendable.
\end{corollary}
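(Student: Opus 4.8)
The plan is to unwind the definitions and leverage the uniqueness statement in Theorem~\ref{thm:3} together with the explicit formula in equation~(\ref{eq:lambda}). Write $\lambda := \lambda_\curlyE(P)$, and let $Q$ be the exchangeable probability measure associated with the exchangeable weight of $P$, so that for each $x\in\curlyX^d$ we have $Q(x) = \min_{y\in[x]}P(y)/\lambda$. The exchangeable component of $P_I$ is, by definition and by Theorem~\ref{thm:3} applied to $P_I$, the unique exchangeable measure $\mu$ on $\curlyX^{|I|}$ with $\mu(x) = \min_{y\in[x]}P_I(y)/\lambda_\curlyE(P_I)$ for $x\in\curlyX^{|I|}$. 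To show $\mu$ is $d$-extendible it suffices to exhibit an exchangeable $\nu$ on $\curlyX^d$ with $\nu_I = \mu$; the natural candidate is $\nu := Q$, and the claim will be that $Q_I = \mu$ precisely when $\lambda = \lambda_\curlyE(P_I)$.

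First I would record the inequality established inside the proof of Theorem~\ref{thm:marginalUB}: marginalizing the bound $P \ge \lambda\cdot Q$ gives $P_I(x) \ge \lambda\cdot Q_I(x)$ for all $x\in\curlyX^{|I|}$, and $Q_I$ is exchangeable. Under the hypothesis $\lambda = \lambda_\curlyE(P_I) > 0$, this says exactly that the exchangeable measure $Q_I$ witnesses the exchangeable weight of $P_I$, i.e. $P_I \ge \lambda_\curlyE(P_I)\cdot Q_I$. By the uniqueness clause of Theorem~\ref{thm:3} (applicable since $\lambda_\curlyE(P_I)>0$), the exchangeable measure achieving the exchangeable weight of $P_I$ is unique, so $Q_I$ must coincide with the exchangeable component $\mu$ of $P_I$. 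Since $Q \in \curlyE$ and $Q_I = \mu$, the measure $Q$ is an exchangeable extension of $\mu$ to $\curlyX^d$, which is precisely the statement that $\mu$ is $d$-extendible.

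The one point that requires a small amount of care — and what I expect to be the only genuine obstacle — is verifying that $Q_I$ is a bona fide probability measure achieving the weight (not merely a sub-probability satisfying the inequality), so that Theorem~\ref{thm:3}'s uniqueness applies cleanly. This is immediate: $Q$ is a probability measure on $\curlyX^d$, so $Q_I$ is a probability measure on $\curlyX^{|I|}$, and exchangeability of $Q$ is inherited by $Q_I$ (as already noted at the start of Section~\ref{sec:upperbounds}). Combined with $P_I \ge \lambda_\curlyE(P_I)\cdot Q_I$ and $\lambda_\curlyE(P_I)>0$, uniqueness in Theorem~\ref{thm:3} forces $Q_I = \mu$. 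I would present the argument in roughly four sentences: (1) marginalize $P\ge\lambda\cdot Q$ and invoke the hypothesis $\lambda=\lambda_\curlyE(P_I)$ to get $P_I \ge \lambda_\curlyE(P_I)\cdot Q_I$ with $Q_I$ exchangeable; (2) apply uniqueness from Theorem~\ref{thm:3} to conclude $Q_I$ equals the exchangeable component of $P_I$; (3) observe $Q\in\curlyE$ restricts to this component; (4) conclude $d$-extendibility by definition.
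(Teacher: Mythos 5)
Your proposal is correct and follows essentially the same route as the paper: marginalize $P\ge\lambda_\curlyE(P)\cdot Q$ to obtain $P_I\ge\lambda_\curlyE(P_I)\cdot Q_I$ under the hypothesis, then invoke the uniqueness clause of Theorem~\ref{thm:3} to identify $Q_I$ with the exchangeable component of $P_I$, so that $Q$ itself is the required exchangeable extension. The extra care you take in checking that $Q_I$ is an exchangeable probability measure is sound and matches the observation already made at the start of Section~\ref{sec:upperbounds}.
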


\begin{proof}
Suppose that $\lambda_\curlyE(P)=\lambda_\curlyE(P_I)>0$, with $|I|=s$. Let $Q$ and $\tilde Q$ denote the  exchangeable components of $P$ and $P_I$, respectively. From the proof of Theorem~\ref{thm:marginalUB}, $P_I\ge\lambda_\curlyE(P)\cdot Q_I=\lambda_\curlyE(P_I)\cdot Q_I$; in particular, due to Theorem~\ref{thm:3}, $\tilde Q=Q_I$. 
\end{proof}

The converse in the corollary is not necessarily true, however. For a counterexample, consider $P\in\curlyP(\{0,1\}^3)$ such that $P(1,0,1)=P(1,1,0)=P(1,1,1)=1/3$; in particular, $\lambda(P)=1/3$. If $I=\{1,2\}$ then $P_I(1,0)=1/3$ and $P_I(1,1)=2/3$, hence $\lambda(P_I)=2/3$, and $\tilde Q=\delta_{(1,1)}$, which is $3$-extendible to $\delta_{(1,1,1)}$ yet $\lambda(P)\ne\lambda(P_I)$.

Finally, another way to bound the exchangeable weight of a probability measure on $\curlyX^d$ is to lump states in $\curlyX$ as follows.

\begin{theorem}
Let $\curlyY$ be a finite set and $r:\curlyX\to\curlyY$ a function, and define $\Phi:\curlyX^d\to\mathcal{Y}^d$ as $\Phi(x):=\big(r(x_1),\ldots,r(x_d)\big)$. Then, for each $P\in\curlyP(\curlyX^d)$, $\lambda_{\curlyE}(P)\leq \lambda_{\curlyE}(P\circ\Phi^{-1})$, where $P\circ\Phi^{-1}$ is the forward  measure of $P$ by $\Phi$.
\label{thm:lumping}
\end{theorem}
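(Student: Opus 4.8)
The plan is to transport a dominating exchangeable component of $P$ through $\Phi$. By Theorem~\ref{thm:supminformula} (equivalently Theorem~\ref{thm:3}) there is $Q\in\curlyE$ with $P\ge\lambda_\curlyE(P)\cdot Q$ on $\curlyX^d$. I will show that the forward measure $Q\circ\Phi^{-1}$ is an exchangeable probability measure on $\curlyY^d$ and that $P\circ\Phi^{-1}\ge\lambda_\curlyE(P)\cdot(Q\circ\Phi^{-1})$; by Definition~\ref{def:main} this immediately gives $\lambda_\curlyE(P\circ\Phi^{-1})\ge\lambda_\curlyE(P)$. (If $\lambda_\curlyE(P)=0$ there is nothing to prove, so one may assume $\lambda_\curlyE(P)>0$.)

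The first step is exchangeability of the forward measure. For a permutation $\sigma$ of $(1,\ldots,d)$, write $T_\sigma$ for the coordinate-permutation map $x\mapsto(x_{\sigma(1)},\ldots,x_{\sigma(d)})$ on $\curlyX^d$, and $T^\curlyY_\sigma$ for the analogous map on $\curlyY^d$. Since $\Phi$ acts coordinatewise, $\Phi\circ T_\sigma=T^\curlyY_\sigma\circ\Phi$; combining this with $Q\circ T_\sigma^{-1}=Q$ (exchangeability of $Q$) yields $(Q\circ\Phi^{-1})\circ(T^\curlyY_\sigma)^{-1}=Q\circ(\Phi\circ T_\sigma)^{-1}=(Q\circ T_\sigma^{-1})\circ\Phi^{-1}=Q\circ\Phi^{-1}$, so $Q\circ\Phi^{-1}$ is exchangeable on $\curlyY^d$.

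The second step is the pointwise domination. For each $y\in\curlyY^d$, summing the bound $P(x)\ge\lambda_\curlyE(P)\cdot Q(x)$ over the fiber $\Phi^{-1}(\{y\})$ gives $(P\circ\Phi^{-1})(y)=\sum_{x\in\Phi^{-1}(\{y\})}P(x)\ge\lambda_\curlyE(P)\sum_{x\in\Phi^{-1}(\{y\})}Q(x)=\lambda_\curlyE(P)\cdot(Q\circ\Phi^{-1})(y)$. Hence $(\lambda_\curlyE(P),\,Q\circ\Phi^{-1})$ is an admissible pair in the supremum defining $\lambda_\curlyE(P\circ\Phi^{-1})$, which proves the inequality.

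I do not anticipate a genuine obstacle here: the content is just that pushing a dominating exchangeable component forward preserves both domination (because $\Phi$-fibers are disjoint) and exchangeability (because $\Phi$ is coordinatewise and therefore commutes with coordinate permutations). The only point requiring minor care is the bookkeeping in the first step, i.e. setting up $T_\sigma$ and $T^\curlyY_\sigma$ and verifying the commutation identity $\Phi\circ T_\sigma=T^\curlyY_\sigma\circ\Phi$, but this is a one-line check.
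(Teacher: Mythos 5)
Your proof is correct and is essentially the paper's argument: both push the dominating exchangeable component $Q$ forward through $\Phi$, use the fiber sum to preserve the pointwise domination, and use that $\Phi$ commutes with coordinate permutations so that $Q\circ\Phi^{-1}$ is exchangeable. The only cosmetic difference is that the paper phrases the final step through the explicit sum formula of Theorem~\ref{thm:3} (via $\min_{z\in[y]}Q(\Phi^{-1}(z))=Q(\Phi^{-1}(y))$) rather than directly from Definition~\ref{def:main}.
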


\begin{proof}
If $Q$ denotes the exchangeable component of $P$ then
\begin{align*}
\lambda_{\curlyE}(P\circ\Phi^{-1}) &= \sum_{[y]\subset\mathcal{Y}^d} |[y]|\cdot\min_{z\in[y]} P(\Phi^{-1}(z)) \\
&\geq \sum_{[y]\subset\mathcal{Y}^d} |[y]|\cdot\min_{z\in[y]} \lambda_{\curlyE}(P)\cdot Q(\Phi^{-1}(z)) \\
&= \lambda_{\curlyE}(P)\cdot\sum_{[y]\subset\mathcal{Y}^d} |[y]|\cdot\min_{z\in[y]} Q(\Phi^{-1}(z))=\lambda_{\curlyE}(P),
\end{align*}
where for the very last identity we have used that $Q$ is exchangeable; in particular, $\min_{z\in[y]}Q(\Phi^{-1}(z)) = Q(\Phi^{-1}(y))$, for each $y\in\mathcal{Y}^d$.
\end{proof}

\section{Estimation of exchangeable weights}\label{sec:estimation}

In this section, $X_1,\ldots,X_n$ denote $d$-dimensional i.i.d. samples from a probability measure $P$ defined over $\curlyX^d$. Define $\lambda:=\lambda_\curlyE(P)$.

Let $\hat P_n:=\sum_{i=1}^n \delta_{X_i}/n$ denote the empirical measure associated with the sample. A natural estimator of $\lambda$ is $\hat{\lambda}_n := \lambda_\curlyE(\hat{P}_n)$. Since $\hat P_n$ is a maximum likelihood estimator (MLE) of $P$, $\hat{\lambda}_n$ is a MLE of $\lambda$. Moreover, as noted in the proof of Theorem~\ref{thm:supminformula}, $\lambda_\curlyE(\cdot)$ is continuous. In particular, since $\hat P_n\to P$ almost surely, $\hat\lambda_n\to\lambda$ also almost surely. Since $0\le\hat\lambda_n\le1$ for all $n\ge1$, it follows that $\hat\lambda_n$ is an asymptotically unbiased estimator of $\lambda$. Nevertheless, because for each $z\in[\curlyX^d]$ the transformation $P\to\min_{y\in z}P(y)$ is concave down, equation~(\ref{eq:lambda}) implies that $\lambda_\curlyE(\cdot)$ is concave down. Thus, by Jensen's inequality, $E\big(\hat\lambda_n\big)\leq\lambda$, i.e. $\hat\lambda_n$ is a negatively biased estimator of $\lambda$.

For each $x\in\curlyX^d$, consider the quantities
\[m_x:=\min_{y\in [x]}P(y);\qquad\sigma_x^2 := m_x(1-m_x).
\]
Since these quantities remain constant within each permutation equivalence class, we sometimes abuse the notation and write for $z\in[\curlyX^d]$: $m_z$ and $\sigma_z^2$ to mean $m_x$ and $\sigma_x^2$, with $x\in z$, respectively.

Our next result characterizes implicitly the asymptotic distribution of $\hat\lambda_n$.

\begin{theorem}
If for each $z\in[\curlyX^d]$, $C_z:=\{x\in z\hbox{ such that }P(x)=m_x\}$ and $c_z:=|C_z|$ then
\begin{equation}
\lim_{n\to\infty}\sqrt{n}\big(\hat\lambda_n-\lambda\big)\stackrel{d}{=}\sum_{z\in[\curlyX^d]}|z|\cdot \min_{x\in C_z}Z_z^{(x)},
\label{ide:intermediate}
\end{equation}
where $(Z_z^{(x)})_{z\in[\curlyX^d],x\in C_z}$ is a normal random vector such that, for each $z\in[\curlyX^d]$, $\big(Z_z^{(x)}\big)_{x\in C_z}$ is a $c_z$-dimensional zero-mean exchangeable normal random vector with variance-covariance matrix $\Sigma_z$ such that $\Sigma_z(x,x)=\sigma_z^2$ and $\Sigma_z(x,y)=-m_z^2$, for all $x,y\in C_z$ with $x\ne y$, and for $z_1,z_2\in[\curlyX^d]$ with $z_1\ne z_2$, $\hbox{cov}\big(Z_{z_1}^{(x)},Z_{z_2}^{(y)}\big)=-m_{z_1}m_{z_2}$, for all $x\in z_1$ and $y\in z_2$. 

\label{thm:asympexchangeable}
\end{theorem}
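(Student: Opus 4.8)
The plan is to reduce the claim to the multivariate central limit theorem for multinomial counts, composed with the map $p\mapsto\lambda_\curlyE(p)=\sum_{z\in[\curlyX^d]}|z|\min_{x\in z}p(x)$, whose only obstruction to ordinary differentiability is the non-smoothness of the inner minima at a point with ties. First I would write, using Theorem~\ref{thm:3},
\[\sqrt n\,(\hat\lambda_n-\lambda)=\sum_{z\in[\curlyX^d]}|z|\cdot\sqrt n\Bigl(\min_{x\in z}\hat P_n(x)-m_z\Bigr),\]
and recall that by the multivariate CLT the vector $\bigl(\sqrt n(\hat P_n(x)-P(x))\bigr)_{x\in\curlyX^d}$ converges in distribution to the Gaussian vector $G=(G_x)_{x\in\curlyX^d}$ with $\mathrm{Var}(G_x)=P(x)(1-P(x))$ and $\mathrm{cov}(G_x,G_y)=-P(x)P(y)$ for $x\ne y$, i.e.\ the multinomial covariance.

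The next step is a localization that tames the minima. Since $\hat P_n(x)\to P(x)$ almost surely for every $x$, for each class $z$ we have $\min_{x\in C_z}\hat P_n(x)\to m_z$ while $\hat P_n(x)\to P(x)>m_z$ strictly for $x\in z\setminus C_z$; hence on an event $E_n$ with $\Prob(E_n)\to1$ all the inner minima are attained on the corresponding sets $C_z$ simultaneously. On $E_n$, using $P(x)=m_x=m_z$ for $x\in C_z$,
\[\sqrt n\Bigl(\min_{x\in z}\hat P_n(x)-m_z\Bigr)=\min_{x\in C_z}\sqrt n\,(\hat P_n(x)-P(x)),\]
so that $\sqrt n(\hat\lambda_n-\lambda)=g(W_n)+R_n$, where $W_n:=\bigl(\sqrt n(\hat P_n(x)-P(x))\bigr)_{x\in\bigcup_z C_z}$, the function $g\bigl((h_x)_x\bigr):=\sum_z|z|\min_{x\in C_z}h_x$ is continuous, and $R_n$ vanishes on $E_n$, hence $R_n\to0$ in probability.

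Then I would conclude by the continuous mapping theorem: $W_n$ converges in distribution to $G$ restricted to the coordinates in $\bigcup_z C_z$, so $g(W_n)\stackrel{d}{\to}\sum_z|z|\min_{x\in C_z}G_x$, and Slutsky's theorem together with $R_n\to0$ in probability yields equation~(\ref{ide:intermediate}) with $Z_z^{(x)}:=G_x$. The claimed covariance structure then follows by specializing the multinomial covariance to indices $x\in C_z$, where $P(x)=m_z$: the diagonal entries become $m_z(1-m_z)=\sigma_z^2$, the within-class off-diagonal entries become $-m_z^2$, and the cross-class entries become $-m_{z_1}m_{z_2}$; exchangeability of $\bigl(G_x\bigr)_{x\in C_z}$ within a class is automatic since its mean is zero and its covariance matrix is invariant under permutations of the coordinates.

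The main obstacle is precisely the non-differentiability of $p\mapsto\min_{x\in z}p(x)$ at $P$ whenever $c_z>1$; it is handled by the localization step above, which is the finite-dimensional incarnation of the delta method for directionally differentiable maps (the directional derivative of $\lambda_\curlyE$ at $P$ in direction $h$ being $\sum_z|z|\min_{x\in C_z}h_x$). A secondary point requiring care is that one needs the \emph{joint} convergence of the rescaled empirical frequencies over $\bigcup_z C_z$, not merely the marginal convergence within each class, since the cross-class covariances $-m_{z_1}m_{z_2}$ enter the limiting law.
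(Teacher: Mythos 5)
Your proposal is correct and takes essentially the same route as the paper: localize each inner minimum to the tie set $C_z$ on an event of probability $1-o(1)$ (the paper's equation~(\ref{eq:minland})), apply the multivariate CLT for the multinomial frequencies jointly over all coordinates, and pass to the limit of the continuous map $h\mapsto\sum_z|z|\min_{x\in C_z}h_x$. The only cosmetic difference is that the paper compares the C.D.F.s of $\delta$ and $\Delta$ directly, whereas you package the same step as a Slutsky argument with a remainder vanishing in probability.
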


\begin{proof}
Our arguments follow closely those in~\cite{bootstrapmax}.

For the sake of notation, we remove the sub-index $n$ from quantities defined in terms of $\hat P_n$; in particular, we write $\hat P$ instead of $\hat P_n$ and $\hat\lambda$ instead of $\hat\lambda_n$. For $x\in\curlyX^d$, define $\hat M_x:=\min_{y\in C_{[x]}}\hat P(y)$ and $\hat m_x :=\min_{y\in[x]}\hat P(y)$. According to the Law of Large Numbers, $\hat P\to P$ almost surely; in particular:
\begin{align}
\Prob\Big(\hbox{there is $x\in\curlyX^d$ such that }\hat M_x\ne\hat m_x\Big)=o(1).
\label{eq:minland}
\end{align}

Define $\delta:=\sqrt{n}\big(\hat\lambda-\lambda\big)$ and $\Delta:=\sqrt{n}\big(\sum_{[x]\subset\curlyX^d}|[x]|\cdot\min_{y\in C_x}\hat P(y)-\lambda\big)$. That is, $\Delta$ is proportional to the difference between $\lambda$ and an estimator of it computed from only those $\hat P(x)$ corresponding to outcomes $x\in\curlyX^d$ for which $P(x)=m_x$. 

Fix $t\in\mathbb{R}$. By equations~(\ref{eq:lambda}) and~(\ref{eq:minland}), 
\begin{equation}
|\Prob(\delta\leq t)-\Prob(\Delta\leq t)|=o(1).
\label{eq:lamsub}
\end{equation}
Furthermore, due to the well-known Central Limit Theorem for the multinomial distribution:
\begin{align}
\Prob(\Delta\leq t) &=\Prob\Big(\sum_{[x]\subset\curlyX^d}|[x]|\cdot\min_{y\in C_x} \sqrt{n}\cdot\big(\hat P(y)-m_x\big)\leq t\Big)\nonumber\\
&\qquad\longrightarrow \Prob\Big(\sum_{[x]\subset\curlyX^d}|[x]|\cdot\min_{y\in C_{[x]}}Z_{[x]}^{(y)}\leq t\Big),
\label{eq:MLEconv}
\end{align}
where $(Z_z^{(x)})_{z\in[\curlyX^d],x\in C_z}$ is a zero-mean normal random vector with variance-covariance matrix as described above. The theorem is now a direct consequence of equations~(\ref{eq:lamsub}) and (\ref{eq:MLEconv}).
\end{proof}

Let $\curlyP_U\subset\curlyP$ denote the set of $P\in\curlyP$ such that $c_x=1$ for each $x\in\curlyX^d$, i.e. for each $z\in[\curlyX^d]$ there is a \underline{u}nique $y\in z$ such that $P(y)=m_z$. The following result is an almost direct consequence of the previous theorem. This result also follows, albeit less directly, from the multivariate delta method~\cite{vandervaart}.

\begin{corollary}
If $P\in\curlyP_U$ then
\[\lim_{n\to\infty}\sqrt{n}\big(\hat\lambda-\lambda)\stackrel{d}{=}Z,\]
where $Z$ is a zero-mean normal random variable with variance
\[V(Z)=\sum_{[x]\subset\curlyX^d}|[x]|^2\,\sigma_x^2-\sum_{[x]\ne[y]}|[x]|\,|[y]|\,m_xm_y.\]
\label{cor:P1limit}
\end{corollary}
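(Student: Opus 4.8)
The plan is to specialize Theorem~\ref{thm:asympexchangeable} to the case $P\in\curlyP_U$, where each $C_z$ is a singleton. First I would observe that when $c_z=1$ for every $z\in[\curlyX^d]$, the inner minimum $\min_{x\in C_z}Z_z^{(x)}$ in equation~(\ref{ide:intermediate}) collapses to a single normal random variable, say $Z_z$, indexed by the unique $x\in C_z$. Consequently the right-hand side of~(\ref{ide:intermediate}) becomes the finite linear combination $\sum_{z\in[\curlyX^d]}|z|\cdot Z_z$, which is a linear image of a jointly normal vector and hence itself a zero-mean normal random variable $Z$. This immediately yields the limiting distributional statement; all that remains is to compute $V(Z)$.

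For the variance computation I would expand $V\big(\sum_z |z|\,Z_z\big) = \sum_z |z|^2\,V(Z_z) + \sum_{z_1\ne z_2}|z_1|\,|z_2|\,\hbox{cov}(Z_{z_1},Z_{z_2})$. From the covariance structure spelled out in Theorem~\ref{thm:asympexchangeable}, the diagonal terms are $V(Z_z)=\Sigma_z(x,x)=\sigma_z^2$ (there being only one $x\in C_z$, the off-diagonal $-m_z^2$ entries never appear), and the cross terms are $\hbox{cov}(Z_{z_1},Z_{z_2})=-m_{z_1}m_{z_2}$. Substituting gives $V(Z)=\sum_{[x]}|[x]|^2\sigma_x^2 - \sum_{[x]\ne[y]}|[x]|\,|[y]|\,m_xm_y$, exactly the claimed formula, using the notational convention that $m_z,\sigma_z$ may be written as $m_x,\sigma_x$ for any representative $x\in z$.

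Alternatively, and perhaps more transparently, I could invoke the multivariate delta method directly: write $\hat\lambda=g(\hat P)$ where $g(P)=\sum_{[x]}|[x]|\min_{y\in[x]}P(y)$, note that on the interior region where each permutation class has a unique minimizer the map $g$ is differentiable (the $\min$ being locally just the coordinate projection onto the minimizing outcome), compute the gradient $\nabla g$, and apply the CLT for the multinomial together with $\sqrt n(\hat P - P)\stackrel{d}{\to}\mathcal N(0,\Sigma)$ with $\Sigma(x,x)=P(x)(1-P(x))$ and $\Sigma(x,y)=-P(x)P(y)$. Then $V(Z)=\nabla g^\top \Sigma \nabla g$, and since $\partial g/\partial P(x)=|[x]|$ when $x$ is the unique minimizer of its class and $0$ otherwise, this quadratic form reproduces the same expression. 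I would present the first (direct corollary) route as the main argument and mention the delta-method route as a remark.

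The only genuine subtlety — and the place I would be most careful — is bookkeeping: ensuring that in the $P\in\curlyP_U$ case the covariance between $Z_{z_1}$ and $Z_{z_2}$ is indeed $-m_{z_1}m_{z_2}$ (it does not pick up any extra factor from summing over $C_{z_1}\times C_{z_2}$, since both are singletons), and that the double sum $\sum_{[x]\ne[y]}$ correctly counts each unordered-versus-ordered pair the way the statement intends. Beyond that the proof is essentially a one-line deduction from Theorem~\ref{thm:asympexchangeable} plus an elementary variance-of-a-linear-combination calculation, so I would not expect any real obstacle.
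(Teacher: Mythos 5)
Your main argument is exactly the paper's proof: specialize Theorem~\ref{thm:asympexchangeable} to the case where every $C_z$ is a singleton, so the minima collapse and the limit is a linear combination of jointly normal variables, hence normal, with the variance read off from the stated covariance structure. The delta-method alternative you mention is also noted in the paper as a less direct route, so there is nothing to add.
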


\begin{proof}
Due to the hypothesis on $P$,
\[\sqrt{n}(\hat\lambda-\lambda)\stackrel{d}{\longrightarrow}Z=\sum_{[x]\subset\curlyX^d}|[x]|\cdot Z_{[x]},\]
where $\big(Z_{[x]}\big)_{[x]\subset\curlyX^d}$ is an ${k+d-1\choose d}$-dimensional exchangeable normal random vector such that $E\big(Z_{[x]}\big)=0$, $V\big(Z_{[x]}\big)=m_x(1-m_x)$, and for $[x]\ne[y]$, $\hbox{cov}\big(Z_{[x]},Z_{[y]}\big)=-m_xm_y$. In particular, $Z$ has a normal distribution, from which the corollary follows.
\end{proof}

\color{black}

Let $\mathbb{X}:=(X_1,\ldots,X_n)$ denote an i.i.d. sample from $P\in\curlyP$, $\mathbb{X}^{*}:=(X_{1}^{*},\ldots,X_{n}^{*})$ denote a single resample with replacement from $\mathbb X$, and $\hat\lambda^*$ denote the exchangeable weight associated with the empirical measure $\hat P^*:=\sum_{i=1}^n\delta_{X_i^*}/n$. In the next theorem and corollary we characterize the asymptotic distribution of the bootstrap distribution estimator of $\sqrt{n}(\hat\lambda^*-\hat\lambda)$. In what follows, $Z=(Z_z^{(x)})_{z\in[\curlyX^d],x\in C_z}$ denotes the normal random vector described in Theorem~\ref{thm:asympexchangeable}, which has dimension $\sum_{z\in[\curlyX^d]}c_z$.

\begin{theorem}
Fix $t\in\mathbb R$, and associate with each vector $v:=(v_z^{(x)})_{z\in[\curlyX^d],x\in C_z}$, the function:
\[\psi(v) := \Prob\left(\sum_{z\subset[\curlyX^d]}|z|\cdot\Big[\min_{x\in C_z}\big(Z_z^{(x)}+v_z^{(x)}\big)-\min_{x\in C_z}v_z^{(x)}\Big]\leq t\right).\]
Note that $\psi(0)=\lim_{n\to\infty} \Prob\big(\sqrt{n}(\hat\lambda-\lambda)\leq t\big)$, the C.D.F. of the limiting random variable described in Theorem~\ref{thm:asympexchangeable}. The bootstrap estimator of $\psi(0)$ is
\[\hat\psi:=\Prob\Big(\sqrt{n}(\hat\lambda^*-\hat\lambda)\leq t\mid\mathbb{X}\Big),\]
and
\begin{align}
\hat\psi\overset{d}{\to}\psi(Y),
\end{align}
where $Y$ is an independent copy of $Z$. 
\label{thm:bootstrap}
\end{theorem}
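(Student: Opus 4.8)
The plan is to mirror the argument behind Theorem~\ref{thm:asympexchangeable} (and~\cite{bootstrapmax}), exploiting that $\lambda_\curlyE(\cdot)$ is piecewise linear. Set $a:=\big(\sqrt{n}(\hat P(x)-m_z)\big)_{z\in[\curlyX^d],\,x\in C_z}$, which is $\mathbb{X}$-measurable, and $b:=\big(\sqrt{n}(\hat P^*(x)-\hat P(x))\big)_{z,\,x\in C_z}$. First I would localize exactly as in~(\ref{eq:minland}): by the strong law $\hat P\to P$ a.s., and by the conditional strong law for the multinomial bootstrap $\hat P^*\to P$ a.s.\ jointly in the sample and the resampling; consequently, a.s.\ (and, given $\mathbb{X}$, with conditional probability tending to $1$ for $\mathbb{X}$-almost every realization), for every $z$ the minima $\min_{x\in z}\hat P(x)$ and $\min_{x\in z}\hat P^*(x)$ are attained inside $C_z$. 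On that event, equation~(\ref{eq:lambda}) gives the exact identity
\[
\sqrt{n}\,(\hat\lambda^*-\hat\lambda)=\sum_{z\in[\curlyX^d]}|z|\,\Big[\min_{x\in C_z}\big(a_x+b_x\big)-\min_{x\in C_z}a_x\Big]=:g(a,b),
\]
where $g$ is the fixed, continuous, piecewise-affine map appearing inside the definition of $\psi$. Since the complementary event has conditional probability $o(1)$ a.s., it follows that $\hat\psi=\Prob\big(g(a,b)\le t\mid\mathbb{X}\big)+o(1)$ a.s.

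Next I would pass to the conditional limit of $b$. By the conditional CLT for the multinomial bootstrap, $b\overset{d}{\to}Z$ conditionally on $\mathbb{X}$, a.s., where $Z$ is the Gaussian vector of Theorem~\ref{thm:asympexchangeable} (whose covariance is the limit of $n$ times the multinomial covariance restricted to the coordinates $\bigcup_z C_z$). For any fixed shift $u$, continuity of $w\mapsto g(u,w)$ then gives $\Prob\big(g(u,b)\le t\mid\mathbb{X}\big)\to\psi(u)$ at every continuity point of the distribution of $g(u,Z)$; and since $g(u,\cdot)$ is piecewise affine with a nonvanishing gradient on each linearity region (on such a region the gradient is a sum of scaled standard basis vectors $\sum_z|z|\,\ebf_{x_z^*}$, one per class $z$), the set $\{w:g(u,w)=t\}$ is a finite union of hyperplane pieces and hence Lebesgue-null, so every $t$ is such a continuity point whenever $Z$ is nondegenerate. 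Because $a=a_n$ is tight but not convergent, I would strengthen this to hold uniformly in $u$: the sublevel sets $\{w:g(u,w)\le t\}$ are finite unions of polyhedra of uniformly bounded combinatorial complexity, hence a uniformity class for $\mathrm{Law}(Z)$ (a target with bounded density), which together with the atomlessness above yields $\sup_u\big|\Prob\big(g(u,b)\le t\mid\mathbb{X}\big)-\psi(u)\big|\to 0$ a.s. In particular $\hat\psi=\psi(a_n)+o(1)$ a.s.

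Finally, $a_n\overset{d}{\to}Z$ by the ordinary multinomial CLT, i.e.\ $a_n$ converges in law to an independent copy $Y$ of the Gaussian vector appearing in $\psi$; and the same piecewise-affine/atomlessness reasoning shows that $v\mapsto\mathrm{Law}(g(v,Z))$ is weakly continuous with no atom at $t$, so that $\psi$ is continuous. The continuous mapping theorem then gives $\psi(a_n)\overset{d}{\to}\psi(Y)$, and combining with $\hat\psi-\psi(a_n)\to 0$ yields $\hat\psi\overset{d}{\to}\psi(Y)$, as claimed. I expect the uniformity step of the second paragraph to be the main obstacle: since the random shift $a_n$ fails to converge, one cannot simply apply the conditional bootstrap CLT against a single continuous functional and must instead obtain a Glivenko--Cantelli-type bound over the polyhedral class $\big\{\{w:g(u,w)\le t\}:u\big\}$, with separate care for a possibly degenerate limiting covariance (restrict to the affine hull of $\mathrm{supp}(Z)$, where the same hyperplane-slicing argument still applies).
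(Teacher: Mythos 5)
Your proposal follows essentially the same route as the paper's proof: localize the minima to the sets $C_z$ (our Lemma~\ref{lemma:Cxpbound}), write $\sqrt{n}(\hat\lambda^*-\hat\lambda)$ as the piecewise-affine functional of the bootstrap increments $\sqrt{n}(\hat P^*-\hat P)$ shifted by $\sqrt{n}(\hat P-P)$, invoke the Bickel--Freedman conditional CLT to replace the increments by $Z$, and then apply the unconditional CLT to the shift together with continuity of $\psi$. The only substantive difference is that you make explicit the uniformity-in-the-shift step (via the observation that the sublevel sets $\{w:g(u,w)\le t\}$ form a uniformity class for the nondegenerate Gaussian limit), which the paper's proof passes over with the remark that the substitution ``introduces some random perturbation to the conditional C.D.F.\ which converges in probability to $0$''; this is a welcome tightening rather than a different argument.
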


The proof of this theorem resembles closely the arguments given in~\cite{bootstrapmax}. We first  require the following lemma.

\begin{lemma}
Define $\hat M^*_x:=\min_{y\in C_x}\hat P^*(y)$ and $\hat m^*_x:=\min_{y\in[x]}\hat P^*(y)$, i.e. $\hat m^*_x$ is the minimum probability estimated from a bootstrap resample in each equivalence class, and $\hat M^*_x$ is the same, estimated only from $C_x$. Then \[\Prob\big(\text{there is }x\in\curlyX^d\text{ such that }\hat M^*_x\neq \hat m^*_x\mid\mathbb X\big)=o_p(1).\]\label{lemma:Cxpbound}
\end{lemma}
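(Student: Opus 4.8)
The plan is to mimic the unconditional argument behind equation~(\ref{eq:minland}), but now at the level of the bootstrap measure, keeping careful track of the conditioning on $\mathbb X$. First note that since $C_x\subset[x]$ we always have $\hat M^*_x\ge\hat m^*_x$, so the event in question is exactly the event that, for some $x$, the minimum of $\hat P^*$ over the equivalence class $[x]$ is attained \emph{only} at points of $[x]\setminus C_x$; equivalently, that there exist $x\in\curlyX^d$, $y\in[x]\setminus C_x$, and $w\in C_x$ with $\hat P^*(y)<\hat P^*(w)$. If $[x]=C_x$ for every $x$ (i.e. $P$ is exchangeable, $\lambda=1$) the claim is vacuous, so assume not and set the gap
\[
\Delta:=\min_{z\in[\curlyX^d]}\ \min_{y\in z\setminus C_z}\big(P(y)-m_z\big)>0,
\]
which is strictly positive because $C_z$ is by definition the set of minimizers of $P$ over $z$ and $[\curlyX^d]$ is finite.

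Next I would introduce the ``good'' event $G_n:=\{\|\hat P-P\|_\infty<\Delta/4\}$, where $\|\cdot\|_\infty$ is the sup over $\curlyX^d$. By the Law of Large Numbers $\hat P\to P$ almost surely, so $\Prob(G_n^c)=o(1)$. On $G_n$ one has, for every $z$, every $y\in z\setminus C_z$ and every $w\in C_z$,
\[
\hat P(y)-\hat P(w)\ \ge\ \big(P(y)-\tfrac{\Delta}{4}\big)-\big(P(w)+\tfrac{\Delta}{4}\big)\ =\ \big(P(y)-m_z\big)-\tfrac{\Delta}{2}\ \ge\ \tfrac{\Delta}{2}.
\]
So on $G_n$ the true argmin points $w\in C_x$ already beat the non-argmin points $y$ in $\hat P$ by at least $\Delta/2$, uniformly.

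Then I would estimate the conditional probability of the bad event on $G_n$. Conditional on $\mathbb X$, the bootstrap counts $n\hat P^*(u)$ are $\mathrm{Binomial}(n,\hat P(u))$, so $\mathrm{Var}\big(\hat P^*(u)\mid\mathbb X\big)=\hat P(u)(1-\hat P(u))/n\le 1/(4n)$. For a fixed triple $(x,y,w)$ as above, on $G_n$,
\[
\Prob\big(\hat P^*(y)<\hat P^*(w)\mid\mathbb X\big)
\le \Prob\Big(\big|\hat P^*(y)-\hat P(y)\big|+\big|\hat P^*(w)-\hat P(w)\big|\ge \tfrac{\Delta}{2}\ \Big|\ \mathbb X\Big)\le \frac{C}{n\Delta^2}
\]
by Chebyshev's inequality, for an absolute constant $C$. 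A union bound over the finitely many triples $(x,y,w)$ gives $\Prob(\text{bad event}\mid\mathbb X)\le C'/(n\Delta^2)$ on $G_n$, and trivially $\le 1$ on $G_n^c$. Hence $\Prob(\text{bad event}\mid\mathbb X)\le C'/(n\Delta^2)+\mathbbm{1}_{G_n^c}$, and since the right-hand side converges to $0$ in probability (the first term deterministically, the second because $\Prob(G_n^c)\to0$), we conclude $\Prob(\text{there is }x\text{ with }\hat M^*_x\neq\hat m^*_x\mid\mathbb X)=o_p(1)$, as claimed.

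\textbf{Main obstacle.} The only delicate point is the interplay between the two sources of randomness: $C_x$ is defined through the fixed measure $P$, not through $\hat P$ or $\hat P^*$, so the argument must first use the (unconditional) LLN to transfer the population gap $\Delta$ to a gap for $\hat P$ on the high-probability event $G_n$, and only then invoke the conditional concentration of the bootstrap measure around $\hat P$. Getting the ``$o_p$'' bookkeeping right — i.e. producing a bound of the form (deterministic $o(1)$) $+$ $\mathbbm{1}_{G_n^c}$ rather than a genuinely conditional-on-everything estimate — is where the care is needed; the concentration estimates themselves are routine.
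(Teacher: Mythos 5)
Your proof is correct and follows essentially the same route as the paper's: both isolate a positive population gap between $P$ on $[x]\setminus C_x$ and $m_x$, use the LLN to transfer that gap to $\hat P$ on a high-probability event, and then control the conditional probability that the bootstrap measure reverses the gap (the paper simply invokes ``a second application of the Law of Large Numbers'' where you make this step quantitative via Chebyshev on the conditional binomial counts $n\hat P^*(u)\mid\mathbb X$, which is if anything more explicit). One nitpick: your ``equivalently'' should be ``in particular,'' since the bad event requires some $y\in[x]\setminus C_x$ to beat \emph{every} $w\in C_x$ while your triple event only asks it to beat one; as you only use this containment for an upper bound, the argument is unaffected.
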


\begin{proof}
If $P\in\curlyE$, $\hat M_x=\hat m_x$, and $\hat M^*_x=\hat m_x$ with probability $1$ for each $x$, and the claim follows.

Instead, for $P\in\curlyP\setminus\curlyE$, define \[\xi:=\min_{[x]\subset\curlyX^d\text{ s.t. }|[x]|<c_x}\left\{\min_{y\in [x]\setminus C_x} p_y-m_x\right\},\] 
that is, the smallest difference between $p_x$, for $x\not\in C_x$, and $m_x$. Note that $\xi$ is positive. Similarly, define $\hat\xi:=\min_{[x]\subset\curlyX^d\text{ s.t. }|[x]|<c_x}\{\min_{y\in [x]\setminus C_x} \hat p_y-\hat M_x\}$. (If $\hat\xi<0$, then $\hat M_x>\hat m_x$ for some $[x]$.) The Law of Large Numbers guarantees that $\Prob(\hat\xi\geq\xi/2)=(1-o(1))$.

Define $Y_n:=\Prob\big(\text{there is }x\in\curlyX^d\text{ such that }\hat M^*_x\neq \hat m^*_x\mid\mathbb X\big)$, a random variable taking values in $[0,1]$. Fix $\epsilon>0$. Then
\begin{align*}
    \Prob(Y_n\geq \epsilon)&=\Prob(Y_n\ge\epsilon\mid\hat\xi\geq\xi/2)\cdot (1-o(1)) +\Prob(Y_n\ge\epsilon\mid\hat \xi<\xi/2)\cdot o(1).
\end{align*}
A second application of the Law of Large Numbers now yields $\Prob(Y_n\ge\epsilon\mid\hat\xi\geq\xi/2)\to0$, proving the lemma.


\end{proof}

\noindent\textit{Proof of Theorem~\ref{thm:bootstrap}.} Define
\begin{align*}
\delta^* &:=\sqrt{n}\big(\hat\lambda^*-\hat\lambda\big)\\   
\Delta^* &:=\sqrt{n}\big(\sum_{z\in[\curlyX^d]}|z|\cdot(\hat M^*_z-\hat M_z)\big).
\end{align*}
Due to Lemma~\ref{lemma:Cxpbound}:
\begin{align}\Prob(\delta^*\leq t\mid\mathbb X)=\Prob(\Delta^*\leq t\mid\mathbb X)+o_p(1).\label{eq:BSconv}
\end{align}
Then it follows:
\begin{align}
\Prob(\Delta^*\leq t\mid\mathbb X)
&= \Prob\left(\sqrt{n}\sum_{z\in[\curlyX^d]}|z|\cdot\Big(\hat M^*_z-\hat M_z\Big)\leq t|\mathbb X\right)\nonumber\\
&= \Prob\Bigg(\sum_{z\in[\curlyX^d]}|z|\cdot\Big[\min_{x\in C_z}\big\{\sqrt{n}(\hat p^*_{x}-\hat p_{x})+\sqrt{n}(\hat p_{x}-m_z)\big\}\label{eq:BSfinite}\\
&\qquad\qquad\qquad\qquad\qquad\qquad-\min_{x\in C_z}\sqrt{n}(\hat p_{x}-m_z)\Big]\leq t|\mathbb X \Bigg) \nonumber \\
&= \psi\Big(\big(\sqrt{n}[\hat p_x-m_z]  \big)_{z\in[\curlyX^d],x\in C_z}\Big)+o_p(1).
\label{eq:weakBSconv}
\end{align}
The last equality is due to the fact that, for almost every sample sequence $X_1,X_2,\ldots$, $\sqrt{n}(\hat P^*-\hat P)$ has the same conditional limiting distribution given $\mathbb{X}$ that $\sqrt{n}(\hat P-P)$ does unconditionally, because $\hat P$ has finite second moments and mean $P$~\cite[Theorem 2.2]{bickelfreedman1981}. That is, for any $s\in\mathbb{R}^{k^d}$, $\Prob(\sqrt{n}(\hat P^*-\hat P)\leq s\mid \mathbb X)\to \lim_{n\to\infty}\Prob(\sqrt{n}(\hat P-P)\leq s)$ almost surely (and therefore in probability). For this reason, we can replace $\sqrt{n}(\hat P^*-\hat P)\mid \mathbb X$ in line~(\ref{eq:BSfinite}) by its limiting random vector, which introduces some random perturbation to the conditional C.D.F. which converges in probability to $0$. 

Finally, because $\psi(v)$ is a continuous function of $v$, we take equation~(\ref{eq:weakBSconv}) with equation~(\ref{eq:BSconv}) to find that \[\Prob(\delta^*\leq t\mid\mathbb X)=\psi\Big(\big(\sqrt{n}[\hat p_x-m_z]\big)_{z\in[\curlyX^d],x\in C_z}\Big)+o_p(1)\overset{d}{\to}\psi\big((Y^{(x)}_z)_{z\in[\curlyX^d],x\in C_z}\big),\] 
as claimed.

\hfill$\Box$

The above convergence is weak, however, by requiring $P\in\curlyP_U$, we can ensure convergence in probability of the bootstrap distribution estimator. The following corollary follows simply from the fact that when $c_z=1$ for each $z\in[\curlyX^d]$, $\min_{x\in C_z}(Z_z^{(x)}-Y_z^{(x)})-\min_{x\in C_z}Y_z^{(x)} = Z_z^{(x)}$.

\begin{corollary}
When $P\in\curlyP_U$,
\begin{align*}
    \Prob\Big(\sqrt{n}(\hat\lambda^*-\hat\lambda)\leq t\mid\mathbb X\Big)\overset{p}{\to}\Prob\Big(\sqrt{n}(\hat\lambda-\lambda)\leq t\Big),
\end{align*}
for each $t\in\mathbb R$.
\end{corollary}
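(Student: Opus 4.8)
The plan is to derive the corollary directly from Theorem~\ref{thm:bootstrap} by specializing to the case $P\in\curlyP_U$, where each permutation-equivalence class $z$ contributes a single coordinate ($c_z=1$), so that the functional $\psi$ degenerates into the ordinary C.D.F. of the limiting normal variable of Corollary~\ref{cor:P1limit}. Concretely, first I would observe that when $c_z=1$ for every $z\in[\curlyX^d]$, the inner minima in the definition of $\psi$ are trivial: writing $x_z$ for the unique element of $C_z$, we have $\min_{x\in C_z}\big(Z_z^{(x)}+v_z^{(x)}\big)-\min_{x\in C_z}v_z^{(x)} = \big(Z_z^{(x_z)}+v_z^{(x_z)}\big)-v_z^{(x_z)} = Z_z^{(x_z)}$, so that $\psi(v)=\Prob\big(\sum_{z\in[\curlyX^d]}|z|\cdot Z_z^{(x_z)}\leq t\big)$ is a \emph{constant} function of $v$, equal to $\psi(0)$.

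The second step is to feed this into the conclusion of Theorem~\ref{thm:bootstrap}: we have $\hat\psi=\Prob\big(\sqrt{n}(\hat\lambda^*-\hat\lambda)\leq t\mid\mathbb X\big)\overset{d}{\to}\psi(Y)$, but since $\psi$ is constant and equal to $\psi(0)$, the limit $\psi(Y)$ is the deterministic constant $\psi(0)$. Convergence in distribution to a constant implies convergence in probability to that constant, so $\Prob\big(\sqrt{n}(\hat\lambda^*-\hat\lambda)\leq t\mid\mathbb X\big)\overset{p}{\to}\psi(0)$. Finally, I would identify $\psi(0)$ with $\lim_{n\to\infty}\Prob\big(\sqrt{n}(\hat\lambda-\lambda)\leq t\big)$, which is exactly what the statement of Theorem~\ref{thm:bootstrap} records (and which also equals $\Prob(Z\le t)$ for the normal $Z$ of Corollary~\ref{cor:P1limit}). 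Putting these together yields $\Prob\big(\sqrt{n}(\hat\lambda^*-\hat\lambda)\leq t\mid\mathbb X\big)\overset{p}{\to}\Prob\big(\sqrt{n}(\hat\lambda-\lambda)\leq t\big)$ for each fixed $t\in\mathbb R$, as claimed.

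There is essentially no hard part here; the only thing to be careful about is the logical step ``convergence in distribution to a constant $\Rightarrow$ convergence in probability,'' which is a standard fact (a degenerate limit law forces the probability mass to concentrate), and the observation that $Y$ being an independent copy of $Z$ is irrelevant once $\psi$ has been shown to be constant on $\curlyP_U$. If one wished to be fully self-contained, one could alternatively invoke Corollary~\ref{cor:P1limit} to note that the unconditional limit law of $\sqrt{n}(\hat\lambda-\lambda)$ is the (continuous) law of $Z$, so that $t\mapsto\Prob(Z\le t)$ is continuous, and then the in-probability convergence of the conditional C.D.F.\ at each $t$ is even uniform in $t$ by a Pólya-type argument; but for the pointwise statement as given, the constancy-of-$\psi$ argument above suffices and is the cleanest route.
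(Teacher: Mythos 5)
Your proposal is correct and follows essentially the same route as the paper: the authors justify the corollary by exactly the observation that when $c_z=1$ the inner minima in $\psi$ collapse so that $\psi$ is constant, hence $\psi(Y)=\psi(0)$ deterministically and weak convergence upgrades to convergence in probability. Your additional care in spelling out the ``convergence in distribution to a constant implies convergence in probability'' step and in identifying $\psi(0)$ with the limiting C.D.F.\ of Corollary~\ref{cor:P1limit} only makes the argument more explicit than the paper's one-line justification.
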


Let $\curlyP_N:=\curlyP\setminus\curlyP_U$ denote the set of sources which have at least one permutation class $z\in[\curlyX^d]$ where $c_z>1$, i.e., sources for which at least one $m_z$ is \underline{n}ot uniquely achieved by $y\in z$. When $P\in\curlyP_N$, a more explicit characterization of the distribution of $Z$ seems very elusive, and in particular, is not Gaussian. To see why, observe that \cite[Corollary 5]{maxEC} implies that $\min_{x\in C_z}Z_z^{(x)}$ has probability density function (p.d.f.):
\[f_z(t)=\frac{c_z}{\sigma_z}\cdot\varphi\left(\frac{t}{\sigma_z}\right)\cdot\Phi_{c_z-1}\left(\frac{t\sqrt{m_z}}{\sigma_z^2},\ldots,\frac{t\sqrt{m_z}}{\sigma_z^2}\,;\,\rho_x\mathbb{I}_{c_z-1}+\frac{\Sigma_x}{\sigma_z^2}\right),\]
where $\varphi(\cdot)$ is the p.d.f. of a standard Normal random variable, $\Phi_k(\,\cdot\,;\Sigma)$ is the cumulative distribution function (c.d.f.) of a zero-mean $k$-dimensional multivariate normal distribution with variance-covariance matrix $\Sigma$, and $\mathbb{I}_k$ is the $k$-dimensional identity matrix.

Unfortunately, the above probability densities are not enough to describe the distribution of $Z$ due to the correlation between the minima in equation~(\ref{ide:intermediate}). Furthermore, Theorem~\ref{thm:bootstrap} does not hold when $P\in\curlyP_N$, because whenever $c_x>1$, $\min_{x\in C_z}(Z_z^{(x)}+Y_z^{(x)})-\min_{x\in C_z}Y_z^{(x)}\neq\min_{x\in C_z}Z_z^{(x)}$ with probability $1$. As a result, the weak convergence in the last line of equation~(\ref{eq:weakBSconv}) is to a version of the distribution of $Z$, but with some Gaussian perturbation. 

Luckily, however, the bootstrap estimator of $Z$ when $P\in\curlyP_U$ can be made consistent by choosing a resample size $n_0$ of order $o(n)$. In this case, we would redefine $\delta^*:=\sqrt{n_0}\big(\hat\lambda^*_n-\hat\lambda\big)$ and $\Delta^*:=\sqrt{n_0}\big(\sum_{z\in[\curlyX^d]}|z|\cdot(\hat M^*_{z}-\hat M_z)\big)$, so that
\begin{align*}
    \hat\psi &= \Prob(\delta^*\leq t\mid\mathbb X) = \Prob(\Delta^*\leq t\mid \mathbb X)+o_p(1)\nonumber\\
    &= \psi\Big(\big(\sqrt{n_0}(\hat p_x-p_x)_{z\in[\curlyX^d],x\in C_z}  \big)\Big)+o_p(1)\\
    &\overset{p}{\to}\psi(0),
\end{align*}
because $\sqrt{n_0}(\hat p_{x}-p_{x})\overset{p}{\to} 0$.

Based on simulations, we have found that it is usually more accurate to use a full size-$n$ resample for the purposes of correcting the bias of $\hat\lambda_n$, even at moderate sample sizes and when $P\in\curlyP_N$. Additionally, Monte Carlo bootstrap estimates of $V(\hat\lambda)$ tend to be more accurate than the asymptotic formula given in derived from Corollary~\ref{cor:P1limit}. With very large samples, users may wish to try using a size $n_0:=2\sqrt{n}$ resample. Users might also explore ad hoc methods for combined estimators of $\min_{y\in[x]}p_y$ when there is strong reason to believe that $P\in\curlyP_N$. In what follows in this manuscript, bias and variance of $\hat\lambda$ are approximated from a Monte Carlo estimate of the bootstrap distribution $\sqrt{n}(\hat\lambda^*_n-\hat\lambda)\mid\mathbb X\overset{d}{\approx}\sqrt{n}(\hat\lambda-\lambda)$.

An explicit Berry-Esseen type bound on the error using the limiting normal distribution to approximate the sampling distribution of $\hat\lambda_n$ remains elusive, therefore we recommend selecting sample size by simulation. The largest difficulty in estimating $\hat\lambda$ is controlling negative bias, especially when $P\in\curlyP_N$. Therefore we recommend to simulate data from several test sources $T\in\curlyE$ to approximate a worst-case sampling distribution for a given sample size. In particular, it is suitable to choose $T$ with $T(\{a\}^d)=0$ for each $a\in\curlyX$, and uniform mass elsewhere, because each of the outcomes $\{a\}^d$ belongs to a singleton equivalence class. Any observation of these outcomes can only increase the estimate of $\lambda(T)$. In our experience on a variety of sample spaces $\curlyX^d$, estimation of $\lambda(T)$ has the largest bias and standard deviation of any source in $\curlyP(\curlyX^d)$.

We suggest the following heuristic: first, select several candidate sample sizes $(n_i)_{i=1,2,\ldots}$. For each candidate sample size, repeatedly simulate $n_i$ outcomes from the test source $T$ described above to get an empirical estimate of standard error and bias of $\lambda(\hat T)$ (the test source above satisfies $\lambda(T)=1$). After selecting a sample size for which standard deviation and bias of $\hat\lambda_{n_i}(T)$ appear acceptably small, collect samples of this size for every source in a coarse grid over $\curlyP(\curlyX^d)$. This is to ensure that even on the most pathological sources we can obtain acceptable estimates of $\lambda$.

\section{DNA methylation analysis}\label{sec:proofofconcept}


When a DNA sequence contains a cytosine residue (C) followed by a guanine residue (G) in the $5'$-to-$3'$ sense, this dimer is referred to as a CpG. The cytosine in a CpG may or may not have methyl group bonded to it in the $5'$ position of its pyrimidine ring. This methylation is regulated by reversible enzymatic processes and is known to modulate gene expression; increased methylation in gene promoters is associated with transcriptional silencing~\cite{methylationtranscription}, and specific DNA methylation patterns have been linked to human disease \cite{methylationDisease}. In particular, certain aberrant methylation patterns are a hallmark of some cancers \cite{methReview}, and as such considerable effort has been expended to determine regions of DNA that have differential methylation under different cellular conditions. 

One popular modern assay to assess DNA methylation is Whole-Genome Bisulfite Sequencing (WGBS)~\cite{WGBS}, a procedure in which unmethylated cytosines are chemically transformed into thymine (T) through treatment with a bisulfite catalyst. When bisulfite-treated DNA is then sequenced by high-throughput shotgun technology, methylated CpGs can be distinguished from unmethylated ones by the observation of a ``CG'' dimer versus a ``TG'' dimer, as depicted in Figure~\ref{fig:MethDiagram}.

\begin{figure}
\centering
\includegraphics[scale=0.25]{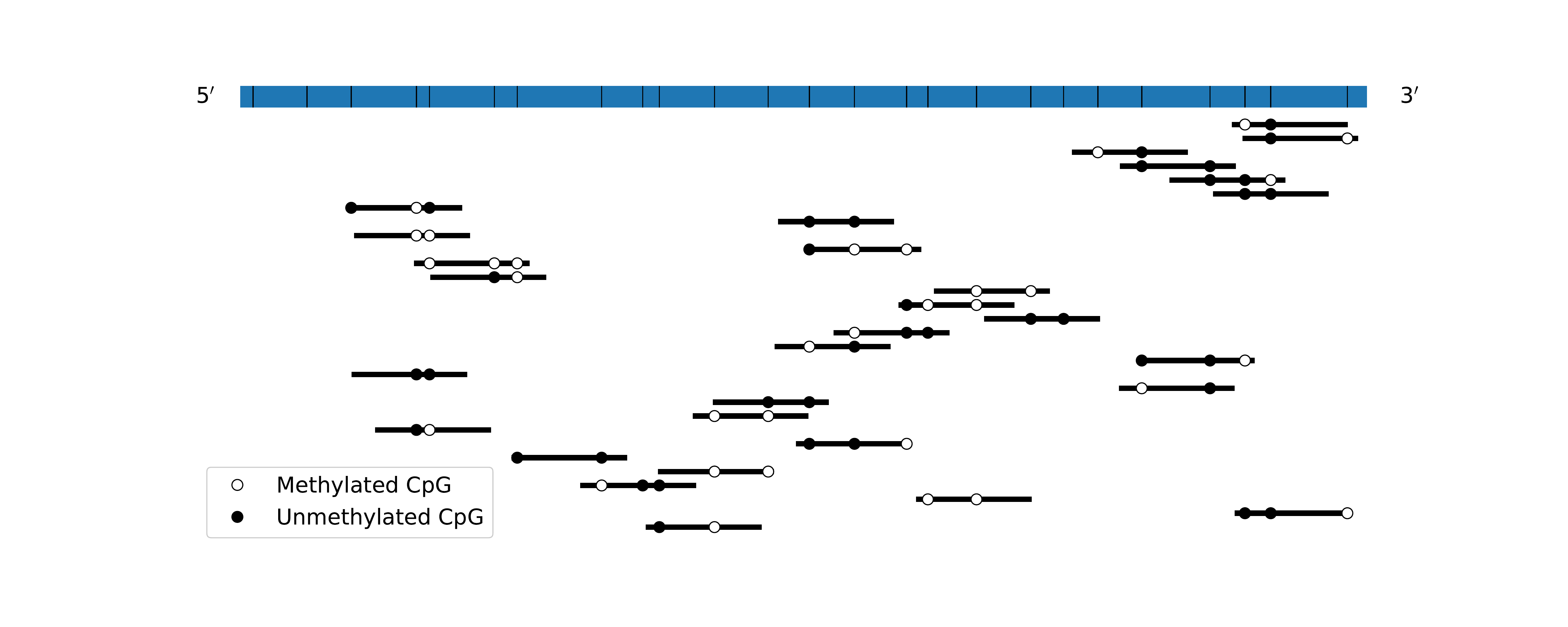}
\caption{Diagram of a typical WGBS experiment. The blue rectangle represents a segment of ssDNA, with the location of CpGs on that strand marked by black vertical bars. Black horizontal line segments represent reads mapped to a reference strand, open and closed circles represent the partially-observed joint methylation status several CpGs. }
\label{fig:MethDiagram}
\end{figure}

When attempting to describe DNA methylation, it is routine to use a sliding window approach~\cite{BSmooth,methylKit} wherein all observations of methylated and unmethylated CpGs are counted in the window, typically $1$ Kb in length, to summarize local methylation. It is common to compare methylation between two different biological samples using, e.g., Fisher's exact test~\cite{methylKit}, concluding that a window is differentially methylated if the null hypothesis of equal distribution can be rejected. This approach assumes that in a single window the methylation status of each CpG contributes identically in its biological effect. For example, if we denote an unmethylated CpG as a `$0$' and a methylated CpG as a `$1$' and consider ten consecutive CpGs in different tissues, with the first tissue always producing the configuration `$1111100000$' and the second always producing the configuration `$0000011111$', the above approach would be unable to identify this locus as differentially methylated. So current approaches for differential methylation implicitly assume that binary sequences representing methylation inside each window have an exchangeable distribution.

To evaluate this assumption, we examined WGBS data from 121 experimental replicates representing 77 unique biological samples, publicly available from ENCODE~\cite{encode,encodeUpdate}. These replicates include clinical tissue samples, cell lines, and primary cells. We selected replicates using single-end reads which were not flagged by ENCODE as having low coverage or insufficient read length. The list of the sample identifiers  (\texttt{ENCODE\_IDs.xlsx}) and processed datasets can be found on GitHub  (\url{https://github.com/antonypearson/OnContaminationofSymbolicDatasets}).

Each replicate is associated with a BAM file generated by mapping reads to GRCh38 using Bismark~\cite{bismark}. We used the MethPipe methylation software suite~\cite{methpipe} to convert BAM files into MethPipe format and generate epiread files, an efficient format reporting the genomic index and methylation status of each CpG contained in a read. 

To investigate the exchangeability of local DNA methylation we focused on sets of 3 consecutive CpGs, which we call ``triplets.'' Due to Theorem~\ref{thm:marginalUB}, a genomic region containing a highly unexchangeable triplet must have highly unexchangeable methylation overall. 

For each replicate, we used the epireads file generated to extract data from ``well-covered'' triplets---i.e. those where all three CpGs are jointly covered by at least $100$ reads, discarding reads which report a CpG with ambiguous methylation status. In all datasets, we observed $637,612$ well-covered triplets, representing $72,815$ unique loci. For each autosome in each sample we estimated the exchangeable weight of each well-covered triplet and corrected for estimator bias using a sample mean of $N=1000$ full bootstrap resamples. Although each estimate of a triplet's exchangeable weight $\hat\lambda$ lies in $[0,1]$, the bias-adjusted estimate $(\hat\lambda-\bar{\hat\lambda^*})$ may be larger than $1$ or smaller than $0$. Therefore we truncate these estimates to $[0,1]$. Available online are Numpy files containing processed triplets corresponding to each BAM file ID. Each row corresponds to a well-covered triplet, with columns corresponding to 1) chromosome number, 2) index of the triplet on the chromosome, 3) an estimate of the total variation distance to the class of exchangeable distributions, 4) an estimate of the exchangeable weight of the triplet (bias-corrected), 5) a bootstrap estimate of the standard deviation of $\hat\lambda$, 6-13) the counts of each of the $8$ possible triplet configurations (ordered lexicographically, i.e. `000', `001', etc.), and 14-21) an estimate of the largest exchangeable component. 

Estimates of triplet exchangeable weight are depicted in Figure~\ref{fig:chromExch}. As seen in the figure, in some chromosomal regions, particularly e.g. on chromosomes 6 and 13, there are triplets whose exchangeable weights are very small. In fact, some appear completely unexchangeable. 

\begin{figure}[h!]
\centering
\includegraphics[scale=0.6]{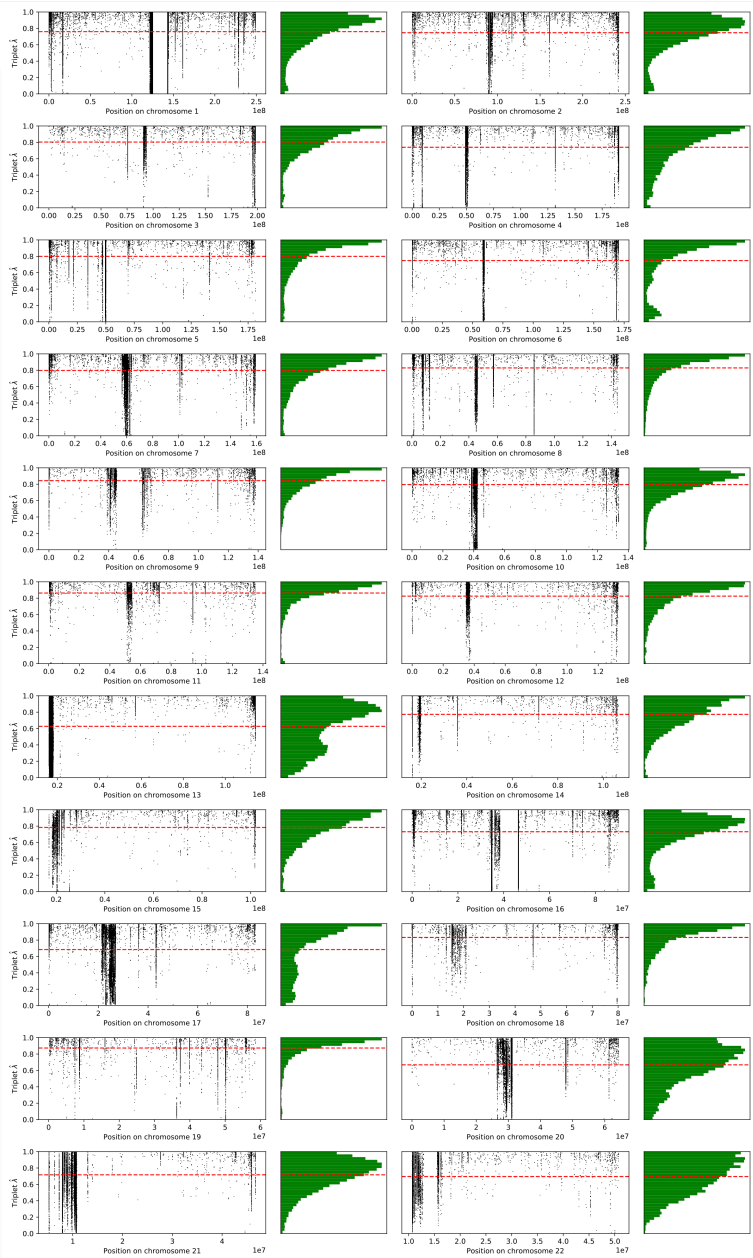}
\caption{Estimated exchangeable weights of well-covered triplets by chromosome. Dashed red lines denote the mean, and green plots the histograms associated with these weights.}
\label{fig:chromExch}
\end{figure}

As seem in Table~\ref{tab:TSSdistCorr}, triplet exchangeability does not appear strongly correlated with the genomic distance to the nearest promoter. Further, as seen in Figure~\ref{fig:TSSdistCorrFig}, within each chromosome, and within each dataset, the correlation between triplet exchangeability and distance from a promoter is usually small. There is a noticeable trend, however, that triplet exchangeabile weight is more likely to be negatively correlated with distance from a promoter. Indeed, both a two-sided Wald test and a Spearman rank-order test of the null hypothesis that TSS proximity and estimated triplet exchangeable weight are uncorrelated give very small p-values ($p\ll 10^{-10}$). That is, despite the small magnitude of the effect, we can detect that triplets close to promoters tend to have more-exchangeable methylation configurations. 

\begin{table}
\centering
\caption{Correlation per chromosome between estimated exchangeable weight of well-covered triplets, and distance between their center and the nearest transcription start site (TSS).}
\label{tab:TSSdistCorr}
\tiny
\begin{tabular}{cccc}
\toprule
Chromosome & Correlation & Chromosome & Correlation\\
\cmidrule(lr){1-2}
\cmidrule(lr){3-4}
1 & $-1.200\times10^{-1}$ & 12 & $+2.676\times10^{-2}$\\
2 & $+1.236\times10^{-1}$ & 13 & $-2.406\times10^{-1}$\\
3 & $+4.738\times10^{-2}$ & 14 & $-2.977\times10^{-1}$\\
4 & $-1.742\times10^{-1}$ & 15 & $-1.999\times10^{-1}$ \\
5 & $-1.584\times10^{-1}$ & 16 & $+4.356\times10^{-2}$ \\
6 & $+7.900\times10^{-2}$ & 17 & $+9.088\times10^{-2}$ \\
7 & $+7.423\times10^{-2}$ & 18 & $+8.553\times10^{-2}$ \\
8 & $+2.869\times10^{-2}$ & 19 & $-6.973\times10^{-2}$ \\
9 & $+2.586\times10^{-2}$ & 20 & $-8.210\times10^{-2}$ \\
10 & $-3.123\times10^{-2}$ & 21 & $+6.563\times10^{-2}$ \\
11 & $+5.987\times10^{-2}$ & 22 & $-2.552\times10^{-1}$ \\
\bottomrule
\end{tabular}
\end{table}

\begin{figure}[h!]
\centering
\includegraphics[scale=1.0]{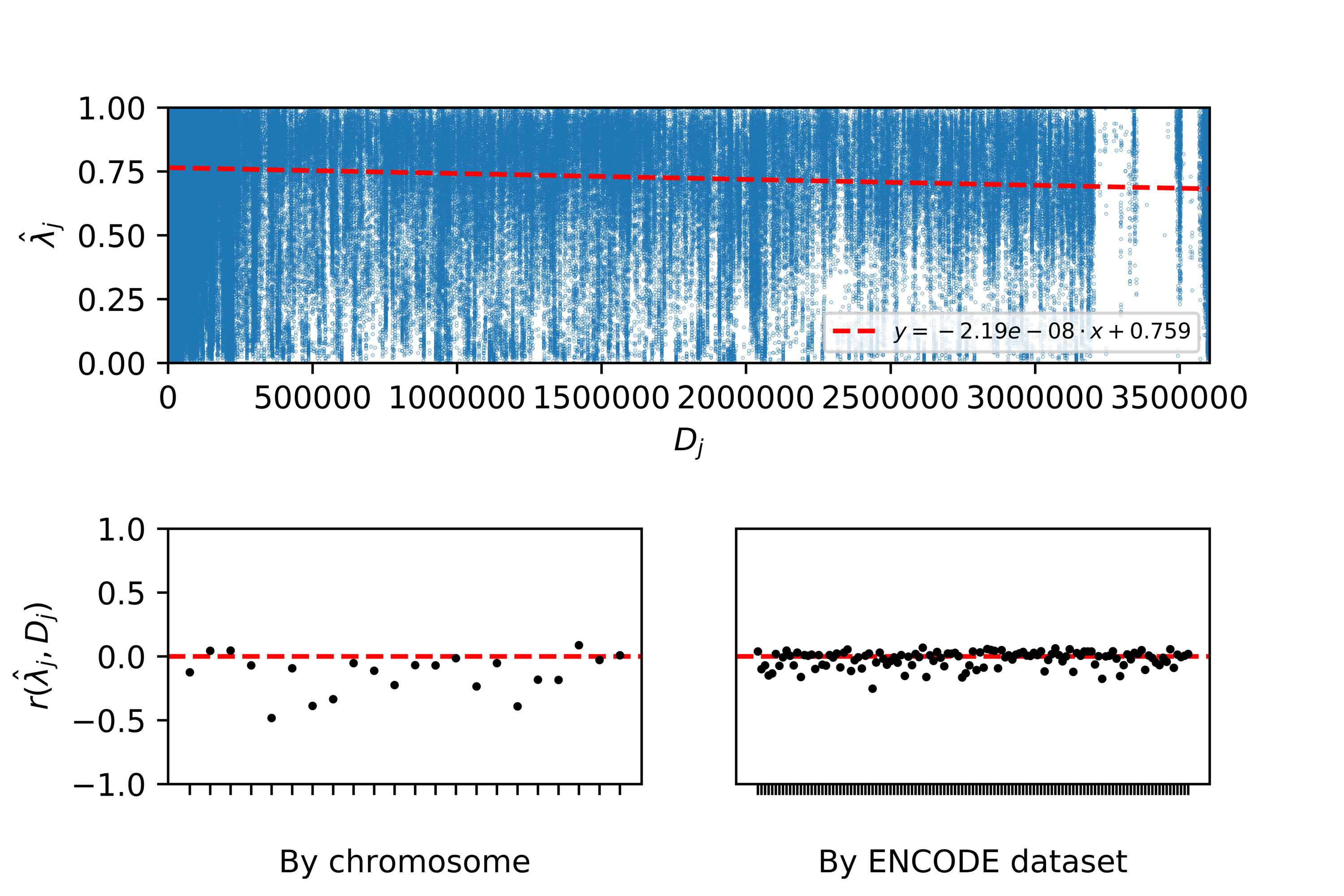}
\caption{Top, plot of estimated exchangeable weight for each well-covered triplet versus its distance from the nearest TSS. Bottom, correlation between distance to the nearest TSS and exchangeable weight of each triplet, per chromosome (left) and per dataset (right).
}
\label{fig:TSSdistCorrFig}
\end{figure}


\begin{figure}[h!]
\centering
\includegraphics[scale=0.7]{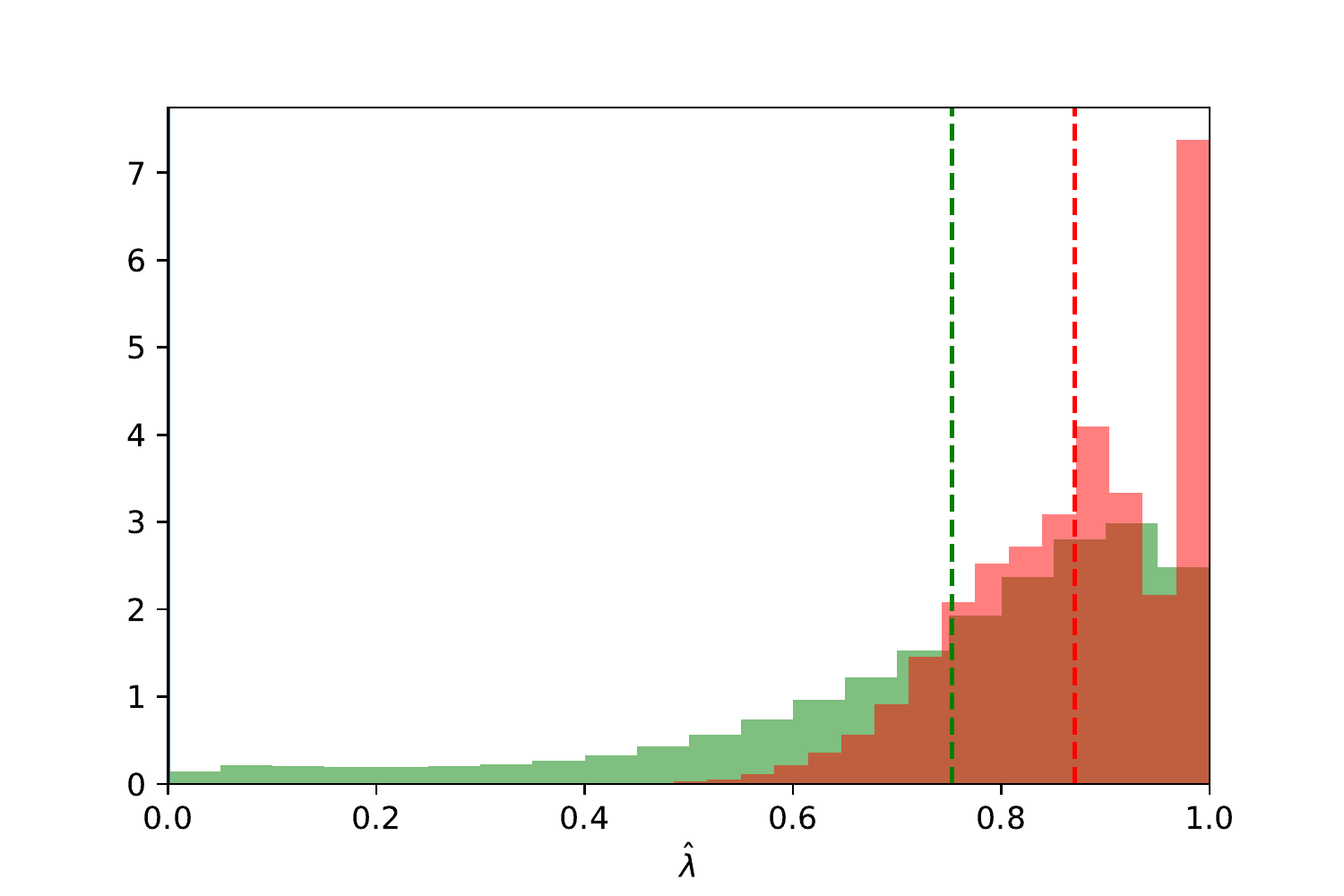} 
\caption{Histogram of estimated exchangeable weights of all triplets from aggregated WGBS data (green) and synthetic samples of size $n=100$ from the uniform distribution on $\{(0,0,1),(0,1,0),(0,1,1),(1,0,0),(1,0,1),(1,1,0)\}$ (red). The averages of estimated exchangeable weights from real and synthetic data are given by the dashed green and red lines, respectively.}
\label{fig:RealSynthetic}
\end{figure}

To confirm that these highly unexchangeable loci are not due to uncertainty in estimation of the exchangeable weight, we simulated data from the uniform distribution over  binary triplets except `000' or `111'. Based on empirical study, this source is the worst case for estimating the exchangeable weight in terms of bias and standard error. Nevertheless, as seen in Figure~\ref{fig:RealSynthetic}, the empirical distribution of estimated exchangeable weights of all triplets gives much greater probability mass near $0$ than the corresponding sampling distribution of the synthetic data. That is, uncertainty from statistical estimation does not account for the apparent phenomenon of highly unexchangeable loci. 

In general, it is impossible to disentangle contamination which is caused by, e.g., sequencing errors or incomplete enzymatic conversion of unmethylated cytosines, from biological processes discriminating specific configurations of methylation. Under the assumption that contamination of the former kind is small, i.e. that we have a truly accurate picture of how methylation is configured in cells, we would expect triplets to have exchangeable weights close to one if overall methylation levels govern biological function. This might mean that in some cell types methylation far away from promoters (and likely far from CpG islands) is ``locked in,'' and specific patterns of methylation rather than overall methylation levels modulate biological function.

Identifying the biological reason for highly unexchangeable loci remains an open question, which may not have a universal answer. We conclude that there are some loci which are far from exchangeable---that is, some configurations of methylation are discriminated at these triplets. The identification of these loci opens opportunities for more high-resolution understanding of methylation patterns. In particular, these loci represent regions where very specific configurations of methylation may regulate function.

\bibliographystyle{siamplain}

\end{document}